\newcommand\F{\mathbb{F}}
\newcommand{\mc}{\mathcal}
\theoremstyle{plain}
\newtheorem{theorem}{Theorem}[section]
\newtheorem{problem}[theorem]{Problem}
\newtheorem{lemma}[theorem]{Lemma}
\newtheorem{proposition}[theorem]{Proposition}
\newtheorem{example}[theorem]{Example}
\newtheorem{remark}[theorem]{Remark}
\numberwithin{equation}{section}
\def\<{\langle}
\def\>{\rangle}
\title{Counterexamples, Constructions, and Nonexistence Results for Optimal Ternary Cyclic Codes}
\author{Jingjun Bao, Hanlin Zou$^*$}
\thanks{$^*$Corresponding author}
\address{Jingjun Bao, School of Mathematics and Statistics, Ningbo University, Ningbo 315211, China}
\email{baojingjun@hotmail.com}
\address{Hanlin Zou, School of Mathematics and Statistics, Yunnan University, Kunming 650091, China}
\email{zouhanlin@ynu.edu.cn}
\keywords{Cyclic code; Optimal code; Ternary code; Sphere packing bound.\\
\indent 2020 Mathematics Subject Classification: 94B15, 11T71}
\begin{document}
\maketitle

\begin{abstract}
Cyclic codes are an important subclass of linear codes with wide applications in communication systems and data storage systems. In 2013, Ding and Helleseth presented nine open problems on optimal ternary cyclic codes $\mc{C}_{(1,e)}$. While the first two and the sixth problems have been fully solved, others remain open. In this paper, we advance the study of the third and fourth open problems by providing the first counterexamples to both and constructing two families of optimal codes under certain conditions, thereby partially solving the third problem. Furthermore, we investigate the cyclic codes $\mc{C}_{(1,e)}$ where $e(3^h\pm 1)\equiv\frac{3^m-a}{2}\pmod{3^m-1}$ and $a$ is odd. For $a\equiv 3\pmod{4}$, we present two new families of optimal codes with parameters $[3^m-1,3^m-1-2m,4]$, generalizing known constructions. For $a\equiv 1\pmod{4}$, we obtain several nonexistence results on optimal codes $\mc{C}_{(1,e)}$ with the aforementioned parameters revealing the constraints of such codes. 
\end{abstract}

\section{Introduction}

Let $p$ be a prime and $m$ a positive integer. Let $\F_{p^m}$ be the finite field of order $p^m$. An $[n,k]$ linear code over $\F_p$ is a $k$-dimensional subspace of $\F_p^n$, and an $[n,k,d]$ linear code is an $[n,k]$ code in which the Hamming weight of each nonzero codeword is at least $d$. An $[n,k]$ code $\mc{C}$ is called {\it cyclic} if $(c_0,c_1,\ldots,c_{n-1})\in\mc{C}$ implies that $(c_{n-1},c_0,c_1,\ldots,c_{n-2})\in\mc{C}$. Cyclic codes are an important class of linear codes and have many applications in engineering and coding theory. Some recent developments of cyclic codes can be found in \cite{BS06,DL13,DY13,FL08,Feng12,JLX11,LF08,ZHJYC10,ZSH12} and the references therein.

Let $\gcd(n,p)=1$. By identifying any codeword $(c_0,c_1,\ldots,c_{n-1})\in \mc{C}$ with
\[c_0+c_1x+c_2x^2+\cdots+c_{n-1}x^{n-1}\in\F_p[x]/(x^n-1),\]
any code of length $n$ over $\F_p$ corresponds to an ideal of the polynomial residue class ring $\F_p[x]/(x^n-1)$. It is well-known that every ideal of $\F_p[x]/(x^n-1)$ is principal. Thus a cyclic code $\mc{C}$ can be expressed as $\mc{C}=\<g(x)\>$ where $g(x)$ is monic and has the least degree. The polynomial $g(x)$ is called a {\it generator polynomial} of $\mc{C}$ and $h(x):=(x^n-1)/g(x)$ is called the {\it check polynomial} of $\mc{C}$.

Let $\alpha$ be a generator of the multiplicative group $\F_{p^m}^*$ of $\F_{p^m}$. For any $0\leq i\leq p^m-2$, denote by $m_i(x)$ the minimal polynomial of $\alpha^i$ over $\F_p$. Let $\mc{C}_p(1,e)$ be the $p$-ary cyclic code of length $p^m-1$ over $\F_p$ with generator polynomial $m_1(x)m_{e}(x)$ where $1<e<p^m-1$ and $e$ is not a power of $p$. It is known that $\mc{C}_p(1,e)$ has dimension $p^m-1-m-\deg(m_e(x))$. Over the years, there has been a lot of interest in the case where $\deg(m_e(x))=m$ and so the code $\mc{C}_p(1,e)$ has dimension $p^m-1-2m$. 
When $p=2$, van Lint and Wilson \cite{VLW86} showed that the minimum distance of $\mc{C}_2(1,e)$ is at most 5 when $m\geq 4$. Thus a binary code $\mc{C}_2(1,e)$ with parameters $[2^m-1,2^m-1-2m,5]$ is considered to be optimal in this sense. Regarding optimal binary codes $\mc{C}_2(1,e)$, it was proved in \cite{CCD00,CCZ98} that $\mc{C}_2(1,e)$ has parameters $[2^m-1,2^m-1-2m,5]$ if and only if $x^e$ is an almost perfect nonlinear monomial over $\F_{2^m}$. 
When $p>3$, as pointed out in \cite{DH13}, a $[p^m-1,p^m-1-2m]$ code $\mc{C}_p(1,e)$ has minimum distance 2 or 3, which might not be interesting. 
When $p=3$, any ternary code with parameters $[3^m-1,3^m-1-2m,4]$ is optimal with respect to the sphere packing bound \cite{HP03}. In 2005, Carlet, Ding and Yuan \cite{CDY05} constructed optimal ternary cyclic codes $\mc{C}_3(1,e)$ with parameters $[3^m-1,3^m-1-2m,4]$ by using perfect nonlinear monomials $x^e$. In 2013, Ding and Helleseth \cite{DH13} constructed several new classes of optimal ternary cyclic codes  $\mc{C}_3(1,e)$ with the same parameters by using almost perfect nonlinear monomials, and a number of other monomials over $\F_{3^m}$. Moreover, nine open problems on this kind of optimal ternary cyclic codes were proposed in the same paper. Since then, optimal ternary cyclic codes with parameters $[3^m-1,3^m-1-2m,4]$ have been studied extensively. Many such codes have been constructed and some of the open problems of Ding and Helleseth have been solved. To the best of our knowledge, the known constructions of optimal ternary cyclic codes $\mc{C}_3(1,e)$ are summarized in Table \ref{tab_known}. For brevity and consistency with the literature, we will use $\mc{C}_{(1,e)}$ to denote $\mc{C}_3(1,e)$ throughout this paper.

\begin{table}[h]\aboverulesep=0pt \belowrulesep=0pt
\setlength{\abovecaptionskip}{0cm}
\setlength{\belowcaptionskip}{0cm}
\caption{\footnotesize Known ternary cyclic codes $\mc{C}_{(1,e)}$ with parameters $[3^m-1,3^m-1-2m,4]$}
\label{tab_known}
\centering
\[\footnotesize
\arraycolsep=5pt
\begin{tabular}{llc}
\toprule
$e$ ($e$ is even) & \text{Conditions} &  \text{Reference} \\
\midrule
&&\\[-0.4cm]
$2$ & $m\geq 2$  & \cite{DH13}\\
$16$ & $m$ is odd, $3\nmid m$ & \cite{LLHDT14}\\
$20$ & $m$ is odd & \cite{LLHDT14}\\
$3^s+1$ &$m/\gcd(m,s)$ is odd & \cite{DO68}\\
$3^s-1$ & $\gcd(m,s)=\gcd(3^m-1,3^s-2)=1$ & \cite{DH13}\\
$(3^s+1)/2$ & $\gcd(m,s)=1$ and $s$ is odd & \cite{CM97}\\
$(3^s-1)/2$ & $m$ is odd, $s$ is even, $\gcd(m,s)=\gcd(m,s-1)=1$ &\cite{DH13}\\
$(3^s+7)/2$ & $m$ is odd and $s$ is even &\cite{ZH20}\\
$(3^s+7)/2+(3^m-1)/2$ & $m$ is odd and $s$ is odd &\cite{ZH20}\\
$2(3^s+1)$ & $m$ is odd & \cite{LZH15}\\
$3^{m/2}+5$ & $m\geq 4$, $m\equiv 0\pmod{4}$ &\cite{HY19}\\
$3^{(m+1)/2}-1$ & $m$ is odd & \cite{DH13}\\
$3^{(m+2)/2}+5$ & $m\geq 6$, $m\equiv 2\pmod{4}$ &\cite{HY19}\\
$(3^{(m+1)/2}-1)/2, (3^{m+1}-1)/8$ & $m\equiv 3\pmod{4}$ &\cite{DH13}\\
$(3^{(m+1)/2}-1)/2+(3^m-1)/2$& $m\equiv 1\pmod{4}$ & \cite{DH13} \\
$(3^{(m+1)/2}+5)/2$ & $m\equiv 1\pmod{4}$ &\cite{ZH20}\\
$(3^{(m+1)/2}+5)/2+(3^m-1)/2$ & $m\equiv 3\pmod{4}$ & \cite{ZH20}\\
$(3^{(m+1)/4}-1)(3^{(m+1)/2}+1)$ & $m\equiv 3\pmod{4}$ &\cite{DH13}\\
$3^{m-1}-1$, $(3^m+1)/4+(3^m-1)/2$ & $m\geq 3$, $m$ is odd &\cite{DH13}\\
$2(3^{m-1}-1)$, $5(3^{m-1}-1)$ & $m$ is odd, $3\nmid m$ & \cite{LLHDT14}\\
$(3^m-3)/2$ & $m\geq 5$, $m$ is odd & \cite{DH13}\\
$(3^m-3)/4$ & $m$ is odd &\cite{YH19}\\
$(3^m-1)/2-2$, $(3^m-1)/2+10$ & $m\equiv 2\pmod{4}$ & \cite{LLHDT14}\\
$(3^m-1)/2-5$, $(3^m-1)/2+7$ & $m$ is odd & \cite{LLHDT14}\\
$(3^{m+1}-1)/8+(3^m-1)/2$ & $m\equiv 1\pmod{4}$ & \cite{DH13}\\
$e\equiv(3^{m-1}-1)/2+3^h\pmod{3^m-1}$ & $m$ is even, $\gcd(m,h+1)=\gcd(3^{h+1}-2,3^m-1)=1$ &\cite{ZLS22}\\
$e\equiv (3^m-1)/2+3^s+1\pmod{3^m-1}$ & $m$ is even, $m/\gcd(m,s)$ is odd & \cite{WW16}\\
$e\equiv (3^m-1)/2+3^s-1\pmod{3^m-1}$  & $m$ is even, $\gcd(m,s)=\gcd(3^m-1,3^s-2)=1$& \cite{WW16}\\
$5e\equiv 2\pmod{3^m-1}$ & $3\nmid m$ &\cite{ZH20}\\
$7e\equiv 2\pmod{3^m-1}$ & $5\nmid m$ &\cite{ZH20}\\
$5e\equiv 4\pmod{3^m-1}$ & $m>2$, $3\nmid m$, $5\nmid m$ & \cite{ZH20}\\
$e(3^s+1)\equiv 3^t+1\pmod{3^m-1}$ & $\gcd(m,t-s)=\gcd(m,t+s)=1$ &\cite{WW16,ZLS22}\\
$e(3^s+1)\equiv (3^m+1)/2\pmod{3^m-1}$ & $m$ is odd & \cite{ZH20}\\
$e(3^s-1)\equiv (3^t-1) \pmod{3^m-1}$ & $\gcd(m,t)=\gcd(m,t-s)=1$  &\cite{WW16,ZLS22}\\
\bottomrule
\end{tabular}
\]
\end{table}

In this paper, we investigate optimal ternary cyclic codes $\mathcal{C}_{(1,e)}$ with parameters $[3^m-1, 3^m-1-2m, 4]$. Our work consists of two main parts. 
First, we address two open problems, i.e., Problem \ref{prob7.9} and Problem \ref{prob7.10}, of Ding and Helleseth proposed in \cite{DH13}. We present the first counterexamples to both problems and solve two special cases of Problem \ref{prob7.9} by constructing two families of optimal codes that satisfy the conditions of Problem \ref{prob7.9}. 
Second, we conduct a comprehensive study of cyclic codes $\mc{C}_{(1,e)}$ satisfying the congruence relation $e(3^h\pm 1) \equiv \frac{3^m-a}{2} \pmod{3^m-1}$ for odd $a$. We obtain two new families of optimal codes $\mc{C}_{(1,e)}$ with $e$ in this form, and several nonexistence results on cyclic codes $\mc{C}_{(1,e)}$ with parameters $[3^m-1,3^m-1-2m,4]$.

The paper is organized as follows. Section \ref{sec_prem} provides necessary background. Section \ref{sec_DH} presents our contributions to Ding and Helleseth's problems. Section \ref{sec_newcodes} constructs new optimal codes $\mc{C}_{(1,e)}$ with $e(3^h\pm 1) \equiv \frac{3^m-a}{2} \pmod{3^m-1}$, while Section \ref{sec_nonexist} establishes nonexistence results. We conclude in Section \ref{sec_conclude} with a summary and discussion of codes $\mc{C}_{(1,e)}$ satisfying $e(3^h\pm 1) \equiv \frac{3^m-a}{2} \pmod{3^m-1}$ and $a\equiv 1\pmod{8}$.

\section{Preliminaries}\label{sec_prem}

In this section, we introduce some tools for constructing optimal ternary cyclic codes with parameters $[3^m-1,3^m-1-2m,4]$.

For a prime $p$, the $p$-cyclotomic coset modulo $p^m-1$ containing $j$ is defined as
\[C_j=\{j\cdot p^s\pmod{p^m-1}: s=0,1,\ldots,m-1\}.\]

In \cite{DH13}, Ding and Helleseth provided the following result giving a way to determine whether a ternary cyclic code $\mc{C}_{(1,e)}$ has parameters $[3^m-1,3^m-1-2m,4]$ or not. It will be our major tool to construct new optimal codes.

\begin{lemma}[{\cite[Theorem 4.1]{DH13}}]\label{lem_dingconds}
Let $e\notin C_1$ and $|C_e|=m$. The ternary cyclic code $\mc{C}_{(1,e)}$ has parameters $[3^m-1,3^m-1-2m,4]$ if and only if the following conditions are satisfied:

{\rm C1}: $e$ is even;

{\rm C2}: $(x+1)^e+x^e+1=0$ has a unique solution $x=1$ over $\F_{3^m}$; and

{\rm C3}: $(x+1)^e-x^e-1=0$ has a unique solution $x=0$ over $\F_{3^m}$.
\end{lemma}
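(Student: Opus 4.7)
The plan is to translate the existence of low-weight codewords into the solvability of polynomial identities over $\F_{3^m}$. Membership of a word $(c_0,\ldots,c_{n-1})$ of length $n=3^m-1$ in $\mc{C}_{(1,e)}$ is equivalent to the pair of equations $\sum_i c_i \alpha^i = 0$ and $\sum_i c_i \alpha^{ei} = 0$. Because $e\notin C_1$ and $|C_e|=m$, the minimal polynomials $m_1$ and $m_e$ are distinct and each of degree $m$, so the dimension is $3^m-1-2m$. An upper bound $d\le 4$ comes from the sphere-packing bound: if $d\ge 5$ the volume $1+2(3^m-1)+4\binom{3^m-1}{2}$ of a Hamming $2$-ball would exceed the redundancy $3^{2m}$. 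The task therefore reduces to characterising $d\ge 4$.

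Next I would show $d\ge 3 \iff $ C1. Suppose a weight-$2$ codeword has values $c_1,c_2\in\F_3^*$ at positions $i_1\ne i_2$. The first defining equation forces $\alpha^{i_2-i_1}=-c_1/c_2$: if $c_1\ne c_2$ this value is $1$ (contradiction with $i_1\ne i_2$), while if $c_1=c_2$ it equals $-1$, so $i_2-i_1=(3^m-1)/2$. Substituting into the second equation yields $(-1)^e=-1$, i.e.\ $e$ is odd. Thus weight-$2$ codewords exist precisely when $e$ is odd, and C1 is exactly the no-weight-$2$ criterion.

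Assuming C1, I would classify weight-$3$ codewords up to scaling by $-1\in\F_3^*$ and permutation of coordinates; only two symbol patterns remain, namely $(c_1,c_2,c_3)=(1,1,1)$ and $(1,1,-1)$. Writing $y_j=\alpha^{i_j}$ and $x=y_2/y_1$, and using that $e$ is even so $(-u)^e=u^e$, the first pattern reduces the system $y_1+y_2+y_3=0$, $\sum_j y_j^e=0$ to $(x+1)^e+x^e+1=0$, and the second pattern reduces to $(x+1)^e-x^e-1=0$. A direct check of the boundary values $\{0,1,-1\}$ of $x$ (those that either force a $y_j$ to vanish or two $y_j$'s to coincide) shows that only $x=1$ solves the first equation and only $x=0$ solves the second. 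Hence non-boundary solutions correspond bijectively to genuine weight-$3$ codewords of the respective type, and C2-C3 exactly forbid them.

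The main obstacle is the bookkeeping in the weight-$3$ reduction: one must verify that the $(c_1,c_2,c_3)\in\{(1,1,1),(1,1,-1)\}$ classification is exhaustive, that every admissible $x$ does lift to a triple of distinct positions with all $y_j$ nonzero (so to a real codeword), and that the degenerate values of $x$ are neither lost nor double-counted in passing from the triple $(y_1,y_2,y_3)$ to the scalar $x$. Once this correspondence is laid out carefully, combining the three steps yields $d=4 \iff \textup{C1}\wedge\textup{C2}\wedge\textup{C3}$.
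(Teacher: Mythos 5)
Your argument is correct and is essentially the standard proof of this result: the paper itself quotes the lemma from \cite{DH13} without proof, and your reduction (dimension from $e\notin C_1$ and $|C_e|=m$, the sphere-packing exclusion of $d\ge 5$, weight-$2$ words forcing $e$ odd, and the two weight-$3$ symbol patterns yielding exactly the equations in C2 and C3 after dividing by $y_1^e$) is the same route taken there. The boundary checks you flag do work out as you say: for $e$ even, $x=1$ is always a solution of $(x+1)^e+x^e+1=0$ and $x=0$ of $(x+1)^e-x^e-1=0$, while $x\in\{0,-1\}$ and $x\in\{1,-1\}$ respectively are never solutions, so the degenerate values are neither lost nor miscounted.
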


\begin{remark}\label{rmk_nescond}
In order to construct ternary cyclic code $\mc{C}_{(1,e)}$ with parameters $[3^m-1,3^m-1-2m,4]$, we must choose $e$ to be even and satisfy that $|C_e|=m$. The reason is as follows. First of all, it is well-known that the dimension of $\mc{C}_{(1,e)}$ is equal to $3^m-1-m-|C_e|$. So the dimension of $\mc{C}_{(1,e)}$ equals $3^m-1-2m$ if and only if $|C_e|=m$. Moreover, if $e$ is odd and $|C_e|=m$, then Lemma \ref{lem_dingconds} guarantees that the minimum distance of $\mc{C}_{(1,e)}$ is not $4$.
\end{remark}

The following lemma allows us to confirm that $|C_e|=m$ in some situation.

\begin{lemma}[{\cite[Lemma 2.1]{DH13}}]\label{lem_Cesize}
For any $1\leq e\leq p^m-2$ with $\gcd(e,p^m-1)=2$, the size of the $p$-cyclotomic coset $C_e$ is equal to $m$.
\end{lemma}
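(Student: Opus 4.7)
The plan is to interpret $|C_e|$ as the multiplicative order of $p$ in a quotient group, and then exploit that the modulus $(p^m-1)/2$ is still comparable in size to $p^m-1$.

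First, I would recall the standard characterization: $|C_e|$ is the smallest positive integer $\ell$ such that $ep^\ell \equiv e \pmod{p^m-1}$, which is equivalent to $N \mid (p^\ell-1)$ with $N := (p^m-1)/\gcd(e,p^m-1)$. In other words, $|C_e| = \mathrm{ord}_N(p)$. Under the hypothesis $\gcd(e,p^m-1)=2$ (which already forces $p$ to be odd and $e$ even), we have $N = (p^m-1)/2$, so the goal reduces to showing $\mathrm{ord}_{(p^m-1)/2}(p) = m$.

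Next, since $p^m \equiv 1 \pmod{p^m-1}$ certainly implies $p^m \equiv 1 \pmod{(p^m-1)/2}$, the order $\ell := \mathrm{ord}_{(p^m-1)/2}(p)$ divides $m$. To rule out $\ell < m$, I would observe that a proper divisor of $m$ satisfies $\ell \le m/2$, and then derive a size contradiction: $\tfrac{p^m-1}{2} \mid (p^\ell - 1)$ would force $p^\ell - 1 \ge \tfrac{p^m-1}{2}$, i.e.,
\[
2p^{m/2} - 2 \ \ge \ 2p^\ell - 2 \ \ge \ p^m - 1,
\]
which rearranges to $(p^{m/2}-1)^2 \le 0$, impossible for $p\ge 3$ and $m\ge 1$. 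Hence $\ell = m$.

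The whole argument is short and the only thing to watch is the edge case $m=1$, where the divisibility $\ell\mid m$ makes the conclusion $\ell=m$ immediate (and the size comparison degenerates harmlessly). There is no real obstacle: the key idea is simply that halving $p^m-1$ does not shrink it enough to allow $p^\ell$ with $\ell\le m/2$ to cover it, so the cyclotomic coset must retain its full length $m$.
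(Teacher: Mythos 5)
Your argument is correct. Note that the paper itself gives no proof of this lemma --- it is quoted verbatim from \cite[Lemma 2.1]{DH13} --- so there is nothing internal to compare against; your reduction of $|C_e|$ to the smallest $\ell$ with $\tfrac{p^m-1}{2}\mid(p^\ell-1)$, followed by the observation that $\ell\mid m$ and the size estimate $2(p^{m/2}-1)\ge p^m-1\iff(p^{m/2}-1)^2\le 0$, is sound and is essentially the same magnitude comparison used in the cited source (which instead notes that $\tfrac{p^m-1}{p^\ell-1}\mid 2$ forces $\ell=m$ since the quotient exceeds $2$ whenever $\ell<m$).
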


\begin{lemma}[{\cite[Corollary 3.47]{LN97}}]\label{lem_irr}
 An irreducible polynomial over $\mathbb{F}_{p^m}$ of degree $n$ remains irreducible over $\mathbb{F}_{p^{ml}}$ if and only if $\gcd(l,n)=1$.
\end{lemma}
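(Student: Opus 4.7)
The plan is to track a single root of the given irreducible polynomial through the field extension and exploit the particularly clean subfield lattice of finite fields. Since every finite field extension is Galois with cyclic Galois group generated by the Frobenius, all irreducible factors of $f$ over a larger base field necessarily share a common degree, so it suffices to compute the degree of the minimal polynomial of one root over $\mathbb{F}_{p^{ml}}$.

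Concretely, I would fix a root $\alpha$ of the given irreducible polynomial $f(x)\in\mathbb{F}_{p^m}[x]$ of degree $n$. Because $f$ is irreducible over $\mathbb{F}_{p^m}$, the minimal polynomial condition forces $\mathbb{F}_{p^m}(\alpha)=\mathbb{F}_{p^{mn}}$, the unique subfield of order $p^{mn}$ inside a fixed algebraic closure of $\mathbb{F}_p$. The task then becomes determining $[\mathbb{F}_{p^{ml}}(\alpha):\mathbb{F}_{p^{ml}}]$, since this degree equals the degree of the minimal polynomial of $\alpha$ over $\mathbb{F}_{p^{ml}}$, and $f$ remains irreducible precisely when that minimal polynomial equals $f$ itself.

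To evaluate this degree, I would apply two standard facts about finite subfields of a fixed algebraic closure: first, $\mathbb{F}_{p^a}\subseteq\mathbb{F}_{p^b}$ if and only if $a\mid b$; second, the compositum satisfies $\mathbb{F}_{p^a}\cdot\mathbb{F}_{p^b}=\mathbb{F}_{p^{\mathrm{lcm}(a,b)}}$. Applied to $\mathbb{F}_{p^{ml}}$ and $\mathbb{F}_{p^{mn}}=\mathbb{F}_{p^m}(\alpha)$, this yields
\[
\mathbb{F}_{p^{ml}}(\alpha)=\mathbb{F}_{p^{ml}}\cdot\mathbb{F}_{p^{mn}}=\mathbb{F}_{p^{m\cdot\mathrm{lcm}(l,n)}},
\]
and therefore
\[
[\mathbb{F}_{p^{ml}}(\alpha):\mathbb{F}_{p^{ml}}]=\frac{m\cdot\mathrm{lcm}(l,n)}{ml}=\frac{n}{\gcd(l,n)}.
\]

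Since the minimal polynomial of $\alpha$ over $\mathbb{F}_{p^{ml}}$ has degree $n/\gcd(l,n)$ and divides $f$, it follows that $f$ remains irreducible over $\mathbb{F}_{p^{ml}}$ if and only if $n/\gcd(l,n)=n$, i.e.\ $\gcd(l,n)=1$. I do not anticipate a real obstacle: the whole argument is a direct application of the structure theorem for finite fields, and the only place requiring care is to justify using the compositum formula (together with the cyclicity of the Frobenius Galois group) to pass from the degree of a single root's minimal polynomial to the irreducibility of $f$ as a whole.
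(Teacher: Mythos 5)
Your argument is correct: the paper gives no proof of this lemma, merely citing it as Corollary 3.47 of Lidl--Niederreiter, and your computation $[\mathbb{F}_{p^{ml}}(\alpha):\mathbb{F}_{p^{ml}}]=m\cdot\mathrm{lcm}(l,n)/(ml)=n/\gcd(l,n)$ via the subfield lattice and compositum formula is exactly the standard argument behind that textbook result. Note that the final step needs only that the minimal polynomial of $\alpha$ over $\mathbb{F}_{p^{ml}}$ divides $f$ and that $f$ is irreducible precisely when this divisor has full degree $n$; the appeal to equidegree factorization via the Frobenius, while true, is not actually required.
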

 
Let $n$ be a nonzero integer and $p$ a prime. We use $(n)_p$ to denote the {\it $p$-part} of $n$ which is the highest power of $p$ dividing $n$. We use $(n)_{p'}$ to denote the {\it $p'$-part} of $n$ which is defined as $\frac{n}{(n)_p}$. For example, we have $(12)_2=4$ and $(12)_{2'}=3$.

The following lemma is basic and will be used frequently throughout the paper. We include a proof of it for completeness.
\begin{lemma}\label{lem_gcd}
Let $q>1$, $k, \ell$ be positive integers. Then 
\begin{enumerate}
\item[\rm(1)] $\gcd(q^k-1,q^\ell-1)=q^{\gcd(k,\ell)}-1$.
\item[\rm(2)] $\gcd(q^k+1,q^\ell-1)=\left\{\begin{array}{ll}
\vspace{0.1cm}q^{\gcd(k,\ell)}+1, &\text{if }(k)_2<(\ell)_2,\\
\gcd(2,q+1),&\text{otherwise}.
\end{array}
\right .
$
\end{enumerate}

\end{lemma}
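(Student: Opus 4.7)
Part (1) is the classical Euclidean-algorithm identity for exponents. If $k=s\ell+r$ with $0\le r<\ell$, then $q^k-1=q^r(q^{s\ell}-1)+(q^r-1)$ and $q^\ell-1\mid q^{s\ell}-1$, so $\gcd(q^k-1,q^\ell-1)=\gcd(q^\ell-1,q^r-1)$. Iterating the Euclidean algorithm on the exponents then yields $\gcd(q^k-1,q^\ell-1)=q^{\gcd(k,\ell)}-1$.

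For part (2), set $d=\gcd(k,\ell)$ and write $k=2^a k'$, $\ell=2^b\ell'$ with $k',\ell'$ odd, and let $g=\gcd(q^k+1,q^\ell-1)$.

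\emph{Case 1: $(k)_2<(\ell)_2$, i.e.\ $a<b$.} Here the plan is to show both divisibilities. For the easy direction, $k/d$ is odd (since $v_2(d)=a=v_2(k)$), so $q^d+1\mid q^k+1$; moreover $2d\mid\ell$ (its $2$-part $a+1$ is at most $b$ and its odd part $d_{\mathrm{odd}}$ divides $\ell'$), whence $q^d+1\mid q^{2d}-1\mid q^\ell-1$. For the reverse direction, fix an odd prime $p\mid g$ and let $e$ be the order of $q$ modulo $p^{v_p(g)}$. Then $q^k\equiv-1$ and $q^\ell\equiv 1$ force $e\mid\gcd(2k,\ell)$ and $e\nmid k$; a direct computation with the above factorizations gives $\gcd(2k,\ell)=2d$, so $e\mid 2d$, $e\nmid d$ (otherwise $e\mid d\mid k$), and $e$ must be even. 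Writing $2d=e(2m+1)$ then yields $q^d=(q^e)^m q^{e/2}=q^{e/2}=-1\pmod{p^{v_p(g)}}$, so $v_p(g)\le v_p(q^d+1)$. The prime $2$ is handled via the standard lifting-the-exponent identities for $v_2(q^n\pm 1)$, split on $q\bmod 4$ and the parities of the exponents: these imply $v_2(q^k+1)=v_2(q^d+1)$ (since $k/d$ is odd) and $v_2(q^\ell-1)\ge v_2(q^d+1)$ (since $2d\mid\ell$), so $v_2(g)=v_2(q^d+1)$. Combining all primes gives $g=q^d+1$.

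\emph{Case 2: $(k)_2\ge(\ell)_2$.} Repeating the order argument, any odd prime divisor $p$ of $g$ would give an order $e$ with $v_2(e)=v_2(k)+1$; combined with $e\mid\ell$ this forces $v_2(k)+1\le v_2(\ell)$, contradicting the case assumption. Hence $g$ is a power of $2$. If $q$ is even, $q^k+1$ is odd and $g=1=\gcd(2,q+1)$. If $q$ is odd, a short split on $q\bmod 4$ and on the parities of $k,\ell$ (using $(\ell)_2\le(k)_2$) shows that in every subcase at least one of $v_2(q^k+1),v_2(q^\ell-1)$ equals $1$, hence $v_2(g)=1$ and $g=2=\gcd(2,q+1)$.

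The main obstacle is the book-keeping at the prime $2$: unlike odd primes, the multiplicative order of $q$ modulo $2^r$ is not uniform, so the argument via the order of $q$ does not directly reach the sharp value of $v_2(g)$. I would therefore invoke the lifting-the-exponent formulas for $v_2(q^n\pm 1)$ and verify the required equalities on a short case table indexed by $q\bmod 4$ and the parities of $k$ and $\ell$; everything else in the proof is a clean order-of-an-element computation.
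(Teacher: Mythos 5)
Your proof is correct, and part (1) coincides with the paper's Euclidean-algorithm argument, but for part (2) you take a genuinely different route. In Case 1 the paper obtains the inclusion $\gcd(q^k+1,q^\ell-1)\mid q^d+1$ in one stroke from B\'ezout: writing $d=ku+\ell v$ and comparing $2$-adic valuations forces $u$ to be odd, whence $q^d+1=(q^k)^u(q^\ell)^v+1\equiv(-1)^u+1\equiv 0 \pmod{D}$ where $D$ is the gcd; no prime-by-prime analysis and no lifting-the-exponent identities are needed. Your local argument (multiplicative orders at odd primes, where you implicitly use that $x^2\equiv 1\pmod{p^r}$ forces $x\equiv\pm1$ for odd $p$, plus LTE at $2$) reaches the same conclusion but, as you note yourself, requires separate bookkeeping at the prime $2$. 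In Case 2 the paper again avoids orders entirely: from $\gcd(2k,\ell)=\gcd(k,\ell)$ it deduces $\gcd\bigl((q^k-1)(q^k+1),q^\ell-1\bigr)=\gcd(q^k-1,q^\ell-1)$, i.e.\ multiplying by $q^k+1$ does not change the gcd with $q^\ell-1$, and then extracts $\gcd(q^k+1,q^\ell-1)=\gcd(2,q+1)$ by splitting each quantity into its $2$-part and odd part. Your version rules out odd prime divisors via the order argument and pins down $v_2$ by a case table on $q\bmod 4$ and the parities of $k,\ell$; this is correct and arguably makes more transparent where the hypothesis $(k)_2\ge(\ell)_2$ enters, but it imports the LTE formulas, which the paper's elementary gcd manipulations avoid. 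Both proofs are complete; the paper's is shorter and more self-contained, yours is the more standard ``local'' number-theoretic argument.
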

\begin{proof}
Part (1) can be easily derived by iteratively applying the Euclidean algorithm. More precisely, if $k=\ell s+t$ with $0\leq t<\ell$, then 
\[\gcd(q^k-1,q^\ell-1)=\gcd(q^t(q^{\ell s}-1)+q^t-1,q^\ell-1)=\gcd(q^t-1,q^\ell-1).\] 
Repeat this process until the remainder becomes 0, so we obtain
\[\gcd(q^k-1,q^\ell-1)=\gcd(q^0-1,q^{\gcd(k,\ell)}-1)=q^{\gcd(k,\ell)}-1.\]
 
Next, we prove part (2). The proof is divided into the following two cases:

{\bf Case 1}. $(k)_2<(\ell)_2$. 

In this case, $\ell$ is even. Let $d=\gcd(k,\ell)$. Then $d=ku+\ell v$ for some integers $u$ and $v$. This equation can be rewritten as
\[(d)_2(d)_{2'}-(k)_2(k)_{2'}u=\ell v.\]
Note that $(d)_2=(k)_2<(\ell)_2$. Then $(d)_{2'}-(k)_{2'}u=(\ell)_2(\ell)_{2'}v/(d)_2$ is even. So $u$ must be odd. 

Let $D=\gcd(q^k+1,q^\ell-1)$. Then we have $q^k\equiv -1\pmod{D}$ and $q^\ell\equiv 1\pmod{D}$. Thus, 
\[q^d+1=(q^k)^u(q^\ell)^v+1\equiv (-1)^u1^v+1\equiv -1+1\equiv 0\pmod{D}.\]
On the other hand, since $d=\gcd(k,\ell)$, we have $k=dy$ and $\ell=dz$ for some integers $y$ and $z$. Since $(d)_2=(k)_2$, then $y$ is odd. Thus, we deduce that 
\[(q^d+1)\mid (q^d)^y+1=q^{k}+1.\] 
Similarly, since $(d)_2<(\ell)_2$, then $z$ is even. Therefore, 
\[(q^d+1)\mid (q^{2d}-1)\mid q^{zd}-1=q^\ell-1\]
Combining both results, we conclude that $q^d+1$ divides both $q^{k}+1$ and $q^\ell-1$, hence $(q^d+1)\mid D$. This completes the proof for the case when $(k)_2<(\ell)_2$.

{\bf Case 2}. $(k)_2\geq (\ell)_2$. 

In this case, we observe that $\gcd(k,\ell)=\gcd(2k,\ell)$. Consequently, 
\begin{equation}\label{eq_lemgcd1}
\gcd((q^k-1)(q^k+1),q^\ell-1)=q^{\gcd(2k,\ell)}-1=q^{\gcd(k,\ell)}-1=\gcd(q^k-1,q^\ell-1).
\end{equation}
If $q$ is even, then $\gcd(q^k-1,q^k+1)=\gcd(q^k-1,2)=1$. It follows from \eqref{eq_lemgcd1} that 
\[ \gcd(q^k+1,q^\ell-1)=1=\gcd(2,q+1).\] 
If $q$ is odd, then $\gcd(2,q+1)=2$. Let $a=q^k-1, b=q^k+1$ and $c=q^\ell-1$. Note that 
\[\gcd(b,c)=\gcd((b)_2,(c)_2)\cdot\gcd((b)_{2'},(c)_{2'}).\] 
In order to prove that $\gcd(b,c)=2$, we only need to show that 
\[ \gcd((b)_2,(c)_2)=2\ {\rm and}\ \gcd((b)_{2'},(c)_{2'})=1.\] 
By \eqref{eq_lemgcd1}, we have $\gcd((a)_{2'}(b)_{2'},(c)_{2'})=\gcd((a)_{2'},(c)_{2'})$. Since  $\gcd(a,b)=\gcd(a,2)=2$, it follows that $\gcd((a)_{2'},(b)_{2'})=1$. Thus, 
\[ \gcd((b)_{2'},(c)_{2'})=1.\] 
Since $q$ is odd, we have $(a)_2\geq 2, (b)_2\geq 2$ and $(c)_2\geq 2$. We now consider two subcases for $(b)_2$:

If $(b)_2=2$, then it is clear that $\gcd((b)_2,(c)_2)=2$. 

If $(b)_2\geq 4$, then we have $(a)_2=2$ since $\gcd(a,b)=2$. By \eqref{eq_lemgcd1}, we have
 \[ \gcd((a)_{2}(b)_{2},(c)_{2})=\gcd((a)_{2},(c)_{2}).\]
This implies that $2=(a)_2\geq (c)_2\geq 2$. Therefore, $(c)_2=2$ and $\gcd((b)_2,(c)_2)=2$. This completes the proof. 
\end{proof}

%
%\begin{lemma}\label{lem_mparity}
%Let $m,h$ and $a$ be integers with $0\leq h\leq m-1$. If there exists an even integer $e$ satisfying $e(3^h\pm 1)\equiv \frac{3^m-a}{2}\pmod{3^m-1}$, then $m$ and $\frac{a-1}{2}$ have the same parity.
%\end{lemma}
%\begin{proof}
%We first note that the congruence $e(3^h\pm1)\equiv \frac{3^m-a}{2}\pmod{3^m-1}$ has an even solution for $e$ if and only if $2k(3^h\pm1)\equiv \frac{3^m-a}{2}\pmod{3^m-1}$ has a solution for $k$. The latter condition holds if and only if 
%\[
%\gcd(2(3^h\pm1),3^m-1)\left\vert \frac{3^m-a}{2}\right..
%\]
%Assume that $a\equiv 1\pmod{4}$. Then $\frac{a-1}{2}$ is even. 
%If $m$ is odd, then $3^m-a\equiv 2\pmod{4}$ and so $\frac{3^m-a}{2}$ is odd. But $\gcd(2\cdot3^h\pm 1,3^m-1)$ is even. Thus $\gcd(2(3^h\pm1),3^m-1)\nmid\frac{3^m-a}{2}$, which is contrary to the existence of $e$. Therefore $m$ must be even. The proof for the case when $a\equiv 3\pmod{4}$ is similar, so we omit it.
%\end{proof}

Finally, we recall the definition and some simple properties of the quadratic character of a finite field. Let $q$ be an odd prime power. The quadratic character of $\F_q^*$ is the function $\eta: \F_q^*\to \{\pm 1\}$ defined by $\eta(x)=1$ if $x$ is a nonzero square in $\F_q^*$ and $\eta(x)=-1$ if $x$ is a nonsquare in $\F_q^*$. For example, we have $\eta(1)=1$ and $\eta(\alpha)=-1$ where $\alpha$ is a generator of $\F_q^*$. It is well-known that $\eta(-1)=-1$ if and only if $q\equiv 3\pmod{4}$. 

We can and will extend the definition of $\eta$ to $\F_q$ by defining $\eta(0)=0$. Then
\[\eta(x)=x^{\frac{q-1}{2}}, \text{for all } x\in\F_q.\]

We will call $\eta$ the quadratic character of $\F_q$ in the rest of the paper.

\section{Ding and Helleseth's open problems and the first two new families of optimal ternary cyclic codes}\label{sec_DH}
In this section, we are mainly concerned with the following problems proposed by Ding and Helleseth in \cite{DH13}.
\begin{problem}[{\cite[Open Problem 7.9]{DH13}}]\label{prob7.9}
Let $e=(3^h+5)/2$, where $1\leq h\leq m-1$. Is it true that the ternary cyclic code $\mc{C}_{(1,e)}$ has parameters $[3^m-1,3^m-1-2m,4]$ if

1) $m$ is odd and $m\not\equiv 0\pmod{3}$;

2) $h$ is odd; and

3) $\gcd(h,m)=1$?
\end{problem}

\begin{problem}[{\cite[Open Problem 7.10]{DH13}}]\label{prob7.10}
Let $e=(3^h-5)/2$, where $2\leq h\leq m-1$. Let $m$ be odd and $m\not\equiv 0\pmod{3}$. Is it true that the ternary code $\mc{C}_{(1,e)}$ has parameters $[3^m-1,3^m-1-2m,4]$ if $h$ is even?
\end{problem}

Before our work, there was no known answer to Problem \ref{prob7.9}, and the only known answer to Problem \ref{prob7.10} was given in \cite{LLHDT14}, where the authors proved that $\mc{C}_{(1,e)}$ has parameters $[3^m-1,3^m-1-2m,4]$ if $e=\frac{3^{m-1}-5}{2}$. In the following, we first present some counterexamples to both problems. Then, we show that in other cases the answer to Problem \ref{prob7.9} is positive. This is done by constructing two families of optimal cyclic codes that satisfy all the conditions of Problem \ref{prob7.9}.

\subsection{Counterexamples to Problem \ref{prob7.9}}
We present two counterexamples to Problem \ref{prob7.9}.
\begin{example}
Let  $m=245$ and $h=9$. Then $e=(3^h+5)/2=9844$. By Magma \cite{Magma}, we can decompose $(x+1)^e-x^e-1$ into a product of  irreducible polynomials over $\mathbb{F}_3$:
\begin{align*}
&(x+1)^e-x^e-1=f(x)\cdot m(x),
\end{align*}
where
{\footnotesize
\begin{align*}
f(x) = &\ x^{245} -x^{244} -x^{243} + x^{242} -x^{240} -x^{239} + x^{238} + x^{235} + x^{234} + x^{233} -x^{232} -x^{231}-x^{230}-x^{228} + x^{227} \\ 
& -x^{226} -x^{224} -x^{221} + x^{220} + x^{219} + x^{218} -x^{216} -x^{215} + x^{214} -x^{210} + x^{208}+ x^{207} -x^{205} -x^{204} \\
& + x^{202} + x^{201} -x^{200} + x^{199} -x^{197} -x^{196} + x^{193} -x^{191} + x^{190} + x^{189}+ x^{188} -x^{187} + x^{186} + x^{185} + x^{183} \\
& + x^{182} + x^{180} + x^{178} + x^{177}-x^{176} + x^{175} -x^{173} + x^{172}-x^{170} -x^{169} + x^{168} + x^{167} + x^{166} -x^{165} + x^{164} \\
&+ x^{162}+ x^{161} -x^{159} + x^{158} + x^{156} + x^{155}-x^{153} + x^{152} + x^{148} + x^{145} + x^{144} -x^{143} -x^{142} -x^{141} -x^{140} \\
& -x^{138} +x^{134} -x^{131} + x^{130} -x^{129} -x^{126} -x^{125} -x^{124} + x^{123} + x^{122}+ x^{120} + x^{119} + x^{118} + x^{117}+ x^{116}\\
&+ x^{115} -x^{113} + x^{112} + x^{110} -x^{108} + x^{104} -x^{103}+ x^{102} -x^{100} + x^{99} -x^{98} + x^{97} -x^{96} + x^{94} -x^{92} \\
&+ x^{90} + x^{87} -x^{85} + x^{83}-x^{81} -x^{79} + x^{78} -x^{77} + x^{74} + x^{72} -x^{71} -x^{70}+ x^{67}-x^{66} + x^{65}+ x^{62}-x^{61}\\
& -x^{59} -x^{58} -x^{55} + x^{54} + x^{53} + x^{51} + x^{50} + x^{48} -x^{47} + x^{46} -x^{45}+ x^{44}+ x^{43} + x^{42} + x^{41} + x^{39} + x^{38}\\
& -x^{36} + x^{35} -x^{34} -x^{33} -x^{32}-x^{31} -x^{30} + x^{29} + x^{28}+ x^{27} + x^{25} -x^{23} + x^{22} -x^{21} -x^{20} + x^{18} -x^{17} \\
& -x^{14} -x^{12}-x^{11} + x^{10} -x^{9} + x^{3} + x + 1
\end{align*}}
is an irreducible polynomial of degree $245$ over $\mathbb{F}_3$, and $m(x)$ is the product of other irreducible polynomials over $\mathbb{F}_3$. Thus $(x+1)^e-x^e-1=0$ has at least $245$ solutions in $\mathbb{F}_{3^m}$ by Lemma \ref{lem_irr}. From Lemma \ref{lem_dingconds} C3, we know that $\mc{C}_{(1,e)}$ does not have parameters $[3^m-1,3^m-1-2m,4]$.
\end{example}

\begin{example}
Let $m=109$ and $h=101$. Then 
\[e=(3^h+5)/2=773066281098016996554691694648431909053161283004.\] Let 
\begin{align*}
f_1(x) = &\ x^{109} - x^{106} + x^{105} + x^{102} - x^{101} + x^{100} - x^{97} - x^{96} - x^{95} - x^{94} - x^{92}- x^{88} \\ 
         & - x^{87} - x^{85} + x^{84} - x^{81} - x^{80} + x^{79}- x^{78} - x^{76} + x^{75} + x^{73} + x^{72} - x^{68}\\
         &  + x^{67} + x^{66} + x^{65}- x^{62} - x^{61} - x^{60} - x^{59} + x^{58} + x^{57}  - x^{56} + x^{55} + x^{53}\\
         & - x^{52}+ x^{51} - x^{49} + x^{47} + x^{46} + x^{43} - x^{41} + x^{39}- x^{37} + x^{36} - x^{35} - x^{34}\\
         & - x^{33} + x^{32}  - x^{30} - x^{28} - x^{27} + x^{26}- x^{25} + x^{23} + x^{22} + x^{19} - x^{18} + x^{17}\\ 
         & - x^{15} + x^{13}+ x^{12}- x^{11} + x^{10} + x^8 + x^7 + x^6 - x^4 + x^3 + 1,\\
f_2(x) = &\ x^{109} + x^{106} - x^{105} + x^{103} + x^{102} + x^{101} + x^{99} - x^{98} + x^{97} + x^{96}- x^{94} + x^{92} \\
      & - x^{91}+ x^{90} + x^{87} + x^{86} - x^{84} + x^{83} - x^{82} - x^{81} - x^{79} + x^{77} - x^{76} - x^{75} \\
      & - x^{74} + x^{73}- x^{72} + x^{70} - x^{68} + x^{66}+ x^{63} + x^{62} - x^{60} + x^{58} - x^{57}  + x^{56}\\
      & + x^{54} - x^{53}+ x^{52}+ x^{51} - x^{50} - x^{49} - x^{48} - x^{47} + x^{44} + x^{43} + x^{42} - x^{41} \\
      &+ x^{37} + x^{36} + x^{34} - x^{33} - x^{31} + x^{30} - x^{29} - x^{28} + x^{25} - x^{24} - x^{22} - x^{21} \\
      &- x^{17} - x^{15} - x^{14} - x^{13} - x^{12} + x^9 - x^8 + x^7 + x^4 - x^3 + 1.
\end{align*}
We check by Magma \cite{Magma} that both $f_1(x)$ and $f_2(x)$ are irreducible over $\F_3$, and all their roots satisfy $(x+1)^e-x^e-1=0$. Thus $(x+1)^e-x^e-1=0$ has at least $218$ solutions in $\mathbb{F}_{3^m}$ by Lemma \ref{lem_irr}. From Lemma \ref{lem_dingconds} C3, we know that $\mc{C}_{(1,e)}$ does not have parameters $[3^m-1,3^m-1-2m,4]$.
\end{example}

\subsection{Counterexamples to Problem \ref{prob7.10}}
We present two counterexamples to Problem \ref{prob7.10}.
\begin{example}
Let $m=295$ and $h=10$. Then $e=(3^h-5)/2=29522$. By Magma \cite{Magma}, we can decompose $(x+1)^e+x^e+1$ into a product of irreducible polynomials over $\mathbb{F}_3$:
\begin{align*}
&(x+1)^e+x^e+1=f(x)\cdot m(x),
\end{align*}
where 
{\footnotesize
\begin{align*}
f(x)=&x^{295} -x^{294} + x^{293} -x^{292} + x^{291} -x^{289} -x^{288} -x^{286} -x^{284} -x^{282} -x^{281} + x^{280} -x^{279} + x^{278} -x^{277}\\
& -x^{275} + x^{274} -x^{273} + x^{270} + x^{268} -x^{266} + x^{264} + x^{263} -x^{262} -x^{261} -x^{259} + x^{258} + x^{257} -x^{256} + x^{255} \\
& + x^{254} -x^{253} -x^{251} -x^{250} -x^{248} -x^{247} -x^{246} + x^{241} -x^{240} + x^{237} -x^{232} -x^{231} -x^{229} + x^{228} -x^{227}\\
& -x^{226} -x^{223} + x^{220} -x^{218} -x^{217} + x^{216} -x^{215} + x^{214} + x^{213} + x^{212} + x^{211} -x^{209} + x^{208} + x^{207} -x^{206} \\
&-x^{205} -x^{204} -x^{203} -x^{202} + x^{201} + x^{200} -x^{195} + x^{193} -x^{191} + x^{189} + x^{188} -x^{187} 
+ x^{186} + x^{184} -x^{183}\\
& -x^{180} + x^{177} -x^{176} -x^{174} -x^{170} + x^{169} + x^{164} -x^{163} + x^{162} + x^{160} + x^{158} -x^{157} -x^{156} -x^{155} + x^{153}\\
&-x^{152} -x^{151} + x^{150} + x^{148} -x^{146} + x^{144} + x^{142} -x^{140} + x^{137} + x^{135} -x^{130} + x^{128} + x^{126} + x^{123} -x^{121}\\
&-x^{120} + x^{119} + x^{118}+ x^{116} -x^{113} + x^{111} + x^{110} -x^{108} + x^{104} -x^{102} + x^{101} -x^{99} -x^{97} + x^{94} -x^{93}\\
& -x^{92} + x^{91} + x^{90} + x^{89} -x^{85} + x^{84} -x^{82} -x^{81} + x^{80} -x^{79} -x^{78} + x^{77} -x^{75} + x^{74} + x^{73}+ x^{71} -x^{70}\\
& -x^{69} -x^{68} -x^{67} -x^{64} + x^{58} -x^{57} + x^{56} + x^{53} + x^{50} -x^{49} -x^{48} + x^{47} 
-x^{46} -x^{44} -x^{43} -x^{40}+ x^{37}\\
& -x^{36} -x^{35} -x^{34} -x^{33} + x^{32} + x^{31} -x^{27} + x^{21}-x^{20} -x^{18} -x^{17} + x^{15} + x^{14} + x^{12} + x^{10}\\
& + x^{9} + x^{8} + x^{7} + x^{6} -x^{5} -x^{3} + x^{2} + 2 \\
\end{align*}}
is an irreducible polynomial of degree $295$ over $\mathbb{F}_3$, and $m(x)$ is the product of other irreducible polynomials over $\mathbb{F}_3$. Thus $(x+1)^e+x^e+1=0$ has at least $295$ solutions in $\mathbb{F}_{3^m}$ by Lemma \ref{lem_irr}. From Lemma \ref{lem_dingconds} C2, we know that $\mc{C}_{(1,e)}$ does not have parameters $[3^m-1,3^m-1-2m,4]$.
\end{example}

\begin{example}
Let $m=35$, $h=32$. Then $e=(3^h-5)/2=926510094425918$. Let
\begin{align*}
f_1(x) &= x^{35} - x^{32} + x^{31} + x^{30} + x^{29} + x^{28} - x^{26} - x^{25} - x^{24}  + x^{22} - x^{18} \\
&\quad- x^{14} + x^{13} - x^{12} - x^{11} + x^{10} + x^9 - x^8 + x^7 + x^6 + x^4 - x^3 - 1, \\
f_2(x) &= x^{35} + x^{32} - x^{31} - x^{29} - x^{28} + x^{27} - x^{26} - x^{25} + x^{24}+ x^{23} - x^{22} \\
       &\quad  + x^{21} + x^{17} - x^{13} + x^{11} + x^{10} + x^9 - x^7 - x^6 - x^5 - x^4 + x^3 - 1.
\end{align*}
We check by Magma \cite{Magma} that both $f_1(x)$ and $f_2(x)$ are irreducible over $\F_3$, and all their roots satisfy $(x+1)^e-x^e-1=0$. Thus $(x+1)^e-x^e-1=0$ has at least $70$ solutions in $\mathbb{F}_{3^m}$ by Lemma \ref{lem_irr}. From Lemma \ref{lem_dingconds} C3, we know that $\mc{C}_{(1,e)}$ does not have parameters $[3^m-1,3^m-1-2m,4]$.
\end{example}

\subsection{Two families of optimal ternary cyclic codes with minimum distance four that satisfy all the conditions of Problem \ref{prob7.9}.}
Although we have found counterexamples to Problem \ref{prob7.9}, the answer to Problem \ref{prob7.9} is not always negative. In this subsection, we present two families of optimal cyclic codes that give positive answers to Problem \ref{prob7.9} in different cases. 

In our first family, we assume that $m\equiv 7, 11 \pmod{12}$ and $h=\frac{m+3}{2}$.

\begin{lemma}\label{lem_main0}
Let $m$ be an integer such that $m\equiv 7, 11 \pmod{12}$ and let $e=\frac{3^{\frac{m+3}{2}}+5}{2}$. Then $e\notin C_1$ and $|C_e|=m$.
\end{lemma}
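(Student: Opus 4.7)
The plan is to handle the two statements separately. Write $k = (m+3)/2$, so that $2e = 3^k + 5$. The key observation driving both parts is that $2k = m+3$, which gives the identity $3^{2k} \equiv 27 \pmod{3^m-1}$.

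For the claim $|C_e| = m$, I would invoke Lemma \ref{lem_Cesize}, which reduces the problem to checking $\gcd(e, 3^m-1) = 2$. A short parity computation handles the lower bound: the hypothesis $m \equiv 7, 11 \pmod{12}$ forces $k = (m+3)/2$ to be odd, so $3^k \equiv 3 \pmod 4$, whence $3^k + 5 \equiv 0 \pmod 4$ and $e$ is even. For the upper bound, let $d = \gcd(e, 3^m-1)$. Then $d \mid 2e = 3^k + 5$, so $3^k \equiv -5 \pmod d$; squaring gives $3^{2k} \equiv 25 \pmod d$. On the other hand, from $2k = m+3$ and $3^m \equiv 1 \pmod d$ we directly get $3^{2k} \equiv 27 \pmod d$. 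Comparing the two congruences yields $d \mid 2$, and combined with $e$ being even we conclude $d = 2$.

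For $e \notin C_1$: since $1 \leq e \leq 3^m - 2$, membership in $C_1$ is equivalent to $e = 3^i$ for some $0 \leq i \leq m-1$. Such an identity forces $3^k + 5 = 2 \cdot 3^i$, which is impossible modulo $3$ when $i \geq 1$ (the right side is $\equiv 0$ but the left side $\equiv 2$) and fails trivially by size when $i = 0$.

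The only nontrivial ingredient is the squaring trick that collapses the gcd computation to $d \mid 27 - 25 = 2$; once this is spotted, everything else is mechanical. The role of the residue condition $m \equiv 7, 11 \pmod{12}$, as opposed to merely $m$ odd, is limited to controlling the parity of $k$, which is exactly what is needed to promote the divisibility $d \mid 2$ to the equality $d = 2$.
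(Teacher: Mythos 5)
Your proposal is correct and matches the paper's argument: the paper establishes $\gcd(e,3^m-1)=2$ via the gcd chain $\gcd(3^k+5,\,3^{m+3}-3^3-(3^k+5)(3^k-5))=\gcd(3^k+5,2)$, which is exactly your squaring trick $27-25=2$, and then applies Lemma \ref{lem_Cesize}. The only (trivial) difference is that the paper dispatches $e\notin C_1$ purely from the parity of $e$, whereas you solve $3^k+5=2\cdot 3^i$ directly; both work.
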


\begin{proof}
Since $m\equiv 7, 11 \pmod{12}$, we have $\frac{m+3}{2}$ is odd. Then $3^{\frac{m+3}{2}}\equiv 3\pmod{4}$. Thus, $e$ is even and so $e\notin C_1$. Next, we show that $|C_e|=m$. Since $m$ is odd, we have 
\begin{align*}
&\gcd(e, 3^m-1)=\gcd\left(\frac{3^{\frac{m+3}{2}}+5}{2},3^m-1\right)\\
=&\gcd\left(3^{\frac{m+3}{2}}+5,3^m-1\right)=\gcd\left(3^{\frac{m+3}{2}}+5,3^{m+3}-3^3\right)\\
=&\gcd\left(3^{\frac{m+3}{2}}+5,3^{m+3}-3^3-(3^{\frac{m+3}{2}}+5)(3^{\frac{m+3}{2}}-5)\right)\\
=&\gcd\left(3^{\frac{m+3}{2}}+5,2\right)=2.
\end{align*}
Consequently, by Lemma \ref{lem_Cesize}, we have $|C_e|=m$. This completes the proof.
\end{proof}

After proving Lemma \ref{lem_main0}, we now consider conditions C2 and C3 in Lemma \ref{lem_dingconds} for $e=\frac{3^{\frac{m+3}{2}}+5}{2}$. 

\begin{lemma}\label{lem_main1}
Let $m$ be an integer such that $m\equiv 7, 11 \pmod{12}$ and let $e=\frac{3^{\frac{m+3}{2}}+5}{2}$. Then the equation 
\begin{equation}\label{eq01}
(x+1)^{2e}=(x^e+1)^2
\end{equation}
has no solutions in $\F_{3^m}\setminus\F_3$ if $\gcd(m, 13)=1.$
\end{lemma}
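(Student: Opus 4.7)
My plan is to transform \eqref{eq01} into a polynomial equation in $x,x^{3^k}$, handle one of two sub-cases trivially using $\gcd(k,m)=1$, and isolate the genuine difficulty in a single reduced equation.

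\textbf{Reformulation and a quadratic in $u=x^{3^k}$.} With $k=(m+3)/2$, we have $2e=3^k+5$. Using $(x+1)^{3^k}=x^{3^k}+1$ and $2=-1$ in $\F_3$, I expand $(x+1)^{2e}=(x^{3^k}+1)(x+1)^5$ and $(x^e+1)^2=x^{3^k+5}-x^e+1$; after simplification, \eqref{eq01} is equivalent to
\[ x^{3^k}\alpha(x)+\beta(x)+x^e=0, \]
where $\alpha(x)=-x^4+x^3+x^2-x+1$ and $\beta(x)=x^5-x^4+x^3+x^2-x$. Squaring and substituting $x^{2e}=x^{3^k+5}=x^{3^k}\cdot x^5$ yields the quadratic in $u:=x^{3^k}$
\[ \alpha(x)^2\,u^2-(\alpha(x)\beta(x)+x^5)\,u+\beta(x)^2=0. \]
A direct computation of $\alpha\beta$ gives $x^5-\alpha\beta=x(x-1)^2(x+1)^6$, so the discriminant is $\Delta=\bigl(x^3(x-1)(x+1)^3\bigr)^2$, a perfect square. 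Hence both roots $u_\pm$ are explicit rational functions of $x$.

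\textbf{Easy branch.} A direct check using $\alpha(x)^2=x^8+x^7-x^6+x^5+x+1$ shows that $u_-=x$. Hence the assumption $x^{3^k}=u_-$ forces $x^{3^k}=x$, placing $x$ in $\F_{3^{\gcd(k,m)}}$. Since $\gcd(k,m)$ divides $2k-m=3$, and $m\equiv 7,11\pmod{12}$ implies $\gcd(m,3)=1$, we conclude $\gcd(k,m)=1$ and $x\in\F_3$.

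\textbf{Hard branch and the main obstacle.} For the other root, subtraction yields
\[ (x^{3^k}-x)\,\alpha(x)^2 = x^3(x-1)(x+1)^3. \]
Using the identity $\alpha(x)^2+x^2(x-1)(x+1)^3=P_2(x)^2$ with $P_2(x)=x^4-x^3+x^2+x-1$ (checked by direct squaring, and noting $\beta=xP_2$), this reduces for $x\neq 0$ to $\alpha(x)^2 x^{3^k-1}=P_2(x)^2$, i.e.,
\[ \alpha(x)\cdot x^{(3^k-1)/2} = \pm P_2(x). \]
Ruling out solutions of this equation in $\F_{3^m}\setminus\F_3$ is the main obstacle. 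My plan is to iteratively apply the Frobenius $\sigma=\phi^k$, exploiting the identity $\sigma^2(x)=x^{3^{m+3}}=x^{27}$ (from $2k=m+3$ combined with $x^{3^m}=x$). The condition $\gcd(m,13)=1$ is expected to enter through the factorization $x^{26}-1=(x^{13}-1)(x^{13}+1)$ governing the $\sigma^2$-orbit of $x$ in $\F_{3^m}$; combined with $\gcd(13k,m)=\gcd(13,m)=1$, the fixed field of the relevant iterate of $\sigma$ reduces to $\F_3$, forcing $x\in\F_3$ and yielding the contradiction.
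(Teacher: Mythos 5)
Your setup is correct and, after unwinding, lands exactly where the paper's proof lands: your quadratic in $u=x^{3^k}$ with perfect-square discriminant, the factoring out of the trivial root $u=x$, and the identity $\alpha+P_2=-x^2$ together reduce the problem to $x^{3^k}=xP_2(x)^2/\alpha(x)^2$, which is precisely the paper's equation \eqref{eq03} (there written as $\theta^{3^{(m+3)/2}}=f(\theta)/g(\theta)$ with $f=\theta P_2^2$ and $g=\alpha^2$ up to sign). The "easy branch" via $\gcd(k,m)\mid 2k-m=3$ is also sound and parallels the paper's elimination of the case $\theta^{(3^k-1)/2}=1$. All the polynomial identities you assert check out over $\F_3$.

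The genuine gap is that your final paragraph is a plan, not a proof, and the plan as stated would not work in the form you describe. The decisive step — ruling out solutions of $\alpha(x)\,x^{(3^k-1)/2}=\pm P_2(x)$ in $\F_{3^m}\setminus\F_3$ — is carried out in the paper by applying the Frobenius $\sigma=\phi^k$ once more to $\theta^{3^k}=R(\theta)$ with $R(\theta)=\theta P_2(\theta)^2/\alpha(\theta)^2$, using $\theta^{3^{2k}}=\theta^{27}$ to obtain the polynomial equation $F(\theta)-\theta^{27}G(\theta)=0$ of degree $107$, and then \emph{computationally factoring} this polynomial over $\F_3$ into six irreducible factors of degree $13$, two of degree $9$, and linear factors. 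Only then do $\gcd(m,13)=1$ and $3\nmid m$ (via Lemma \ref{lem_irr}) exclude all roots outside $\F_3$. Your heuristic that $\gcd(m,13)=1$ "is expected to enter through the factorization $x^{26}-1=(x^{13}-1)(x^{13}+1)$ governing the $\sigma^2$-orbit" is not the actual mechanism and is not substantiated: the $13$ arises as the degree of irreducible factors of a specific degree-$107$ resultant, a fact for which no conceptual explanation is offered in the paper and none is supplied by you. Without performing this factorization (or replacing it by a genuine argument), the lemma is not proved; everything before that point is a correct but routine reduction that both you and the paper carry out.
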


\begin{proof}
It is easy to see that $x=0, 1$ are the only solutions in $\mathbb{F}_3$ for (\ref{eq01}). It can be verified
that 
\begin{align*}
&(x+1)^{2e}=(x+1)^{3^{\frac{m+3}{2}}+5}=(x+1)^{3^{\frac{m+3}{2}}}(x+1)^3(x+1)^2\\
=&x^{3^{\frac{m+3}{2}}+5}-x^{3^{\frac{m+3}{2}}+4}+x^{3^{\frac{m+3}{2}}+3}+x^{3^{\frac{m+3}{2}}+2}-x^{3^{\frac{m+3}{2}}+1}+x^{3^{\frac{m+3}{2}}}+x^5-x^4+x^3+x^2-x+1.
\end{align*}
Note that (\ref{eq01}) can be simplified as
\begin{equation}\label{eq02}
(x^4-x^3-x^2+x-1)(x^{3^{\frac{m+3}{2}}-1}-1)=x^2(x^{\frac{3^{\frac{m+3}{2}}-1}{2}}-1).
\end{equation}
Suppose that $\theta\in \mathbb{F}_{3^m}\setminus \mathbb{F}_3$ is a solution for (\ref{eq01}), then we have
\begin{equation}\label{eq02}
(\theta^4-\theta^3-\theta^2+\theta-1)(\theta^{3^{\frac{m+3}{2}}-1}-1)=\theta^2(\theta^{\frac{3^{\frac{m+3}{2}}-1}{2}}-1).
\end{equation}
If $\theta^{\frac{3^{\frac{m+3}{2}}-1}{2}}=1$, then we have $\theta^{3^{\frac{m+3}{2}}-1}=1$. Since $\gcd(m,3)=1$, we have $\gcd(\frac{m+3}{2}, m)=\gcd(m+3, m)=1$. Then 
$$1=\theta^{\gcd(3^{\frac{m+3}{2}}-1, 3^m-1)}=\theta^{3^{\gcd(\frac{m+3}{2}, m)}-1}=\theta^2,$$
which implies that $\theta\in \mathbb{F}_3$. This contradicts the assumption that $\theta\in \mathbb{F}_{3^m}\setminus \mathbb{F}_3$. Hence we have $\theta^{\frac{3^{\frac{m+3}{2}}-1}{2}}\neq 1$. We then deduce from \eqref{eq02} that
\begin{equation}\label{eq03}
\theta^{3^{\frac{m+3}{2}}}=\frac{\theta(-\theta^4 + \theta^3 - \theta^2 -\theta + 1)^2}
{(\theta^4  -\theta^3 -\theta^2 + \theta - 1)^2}\triangleq\frac{f(\theta)}{g(\theta)},
\end{equation}
where $f(\theta)=\theta(-\theta^4 + \theta^3 - \theta^2 -\theta + 1)^2$ and $g(\theta)=(\theta^4  -\theta^3 -\theta^2 + \theta - 1)^2$. Raising both sides of Equation \eqref{eq03} to the $3^{\frac{m+3}{2}}$-th power, we have
\begin{equation}\label{eq04}
\theta^{3^{m+3}}=\frac{\theta^{3^{\frac{m+3}{2}}}(-\theta^{4\cdot 3^{\frac{m+3}{2}}}+\theta^{3\cdot 3^{\frac{m+3}{2}}}
-\theta^{2\cdot 3^{\frac{m+3}{2}}}-\theta^{3^{\frac{m+3}{2}}}+1)^2}{(\theta^{4\cdot 3^{\frac{m+3}{2}}}-\theta^{3\cdot 3^{\frac{m+3}{2}}}-\theta^{2\cdot 3^{\frac{m+3}{2}}}+\theta^{3^{\frac{m+3}{2}}}-1)^2}.
\end{equation}
Plugging $\theta^{3^{\frac{m+3}{2}}}=\frac{f(\theta)}{g(\theta)}$ into \eqref{eq04}, we obtain
\begin{equation}\label{eq05}
\theta^{3^{m+3}}=\frac{F(\theta)}{G(\theta)},
\end{equation}
where 
\[F(\theta)=f(\theta)(-f(\theta)^4+f(\theta)^3g(\theta)-f(\theta)^2g(\theta)^2-f(\theta)g(\theta)^3+g(\theta)^4)^2,\] and 
\[G(\theta)=g(\theta)(f(\theta)^4-f(\theta)^3g(\theta)-f(\theta)^2g(\theta)^2+f(\theta)g(\theta)^3-g(\theta)^4)^2.\]
Note that $\theta^{3^m}=\theta$. Then \eqref{eq05} is reduced to
$$F(\theta)-\theta^{27}G(\theta)=0.$$
By Magma \cite{Magma}, the left-hand side of the above equation can be decomposed into the product of some irreducible factors as
\begin{align*}
&(\theta^{13}-\theta^{12}-\theta^{11}-\theta^{10}+\theta^9-\theta^7-\theta^5+\theta^4+\theta^3-1)\\
&(\theta^{13}-\theta^{12} -\theta^{11}-\theta^7-\theta^6-\theta^5-\theta^4-\theta^3+\theta^2-\theta-1)\\
&(\theta^{13}+\theta^{12}-\theta^{11}+\theta^{10}+\theta^9+\theta^8+\theta^7+\theta^6+\theta^2+\theta-1)\\
&(\theta^{13}+\theta^{12}-\theta^{10}+\theta^9-\theta^8+\theta^7+\theta^6+\theta^5+\theta^4-\theta^3+\theta^2-1)\\
&(\theta^{13}-\theta^{11}+\theta^{10}-\theta^9-\theta^8-\theta^7-\theta^6+\theta^5-\theta^4+\theta^3-\theta-1)\\
&(\theta^{13}-\theta^{10}-\theta^9+\theta^8+\theta^6-\theta^4+\theta^3+\theta^2+\theta-1)\\
&(\theta^9-\theta^7-\theta^6-\theta^5+\theta^4+\theta^2-1)(\theta^9-\theta^7-\theta^5+\theta^4+\theta^3+\theta^2-1)\\
&(\theta-1)^9(\theta+1)\theta=0
\end{align*}
over $\mathbb{F}_3$. Now from $\gcd(6,m)=1$ and Lemma \ref{lem_irr}, we can conclude that the last equation has no solution in $\mathbb{F}_{3^m}\setminus \mathbb{F}_3$ if $\gcd(m,13)=1$. This completes the proof.
\end{proof}

By Lemmas \ref{lem_main0} and \ref{lem_main1}, we obtain the following conclusion which is a partial answer to Problem \ref{prob7.9}.

\begin{theorem}\label{thm_main0}
Let $m>1$ be an odd integer with $m\equiv 7, 11\pmod {12}$ and $\gcd(m, 13)=1$. Let $e=\frac{3^{\frac{m+3}{2}}+5}{2}$. Then the ternary cyclic code $\mc{C}_{(1,e)}$ has parameters $[3^m-1,3^m-1-2m,4]$. 
\end{theorem}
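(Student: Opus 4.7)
The plan is to assemble the three conditions of Lemma \ref{lem_dingconds} from the two lemmas already established. Lemma \ref{lem_main0} directly delivers $|C_e|=m$ (so the code has dimension $3^m-1-2m$), the fact that $e\notin C_1$, and that $e$ is even — this handles both the hypothesis of Lemma \ref{lem_dingconds} and condition C1. What remains is to verify C2, that $(x+1)^e+x^e+1=0$ has only $x=1$ as a solution in $\F_{3^m}$, and C3, that $(x+1)^e-x^e-1=0$ has only $x=0$.

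The bridge to Lemma \ref{lem_main1} is the characteristic-$3$ identity
\[
(x+1)^{2e}-(x^e+1)^2 \;=\; \bigl((x+1)^e-x^e-1\bigr)\bigl((x+1)^e+x^e+1\bigr),
\]
so every zero in $\F_{3^m}$ of either the C2 polynomial or the C3 polynomial is a zero of $(x+1)^{2e}=(x^e+1)^2$, and conversely. Lemma \ref{lem_main1} tells us that the only zeros of the latter in $\F_{3^m}$ lie in $\F_3=\{0,1,-1\}$. Consequently, verifying C2 and C3 reduces to a finite check of three field elements.

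The final step is that finite check. Using that $e$ is even (so $2^e=(-1)^e=1$ in $\F_3$), I would substitute $x=0,1,-1$ into both polynomials directly: the substitutions show that $x=1$ is the unique root of $(x+1)^e+x^e+1$ in $\F_3$ (with $x=0,-1$ both giving the value $2$), and that $x=0$ is the unique root of $(x+1)^e-x^e-1$ in $\F_3$ (with $x=1$ giving $-1$ and $x=-1$ giving $-2$). Together with the factorization above, this establishes C2 and C3, and Lemma \ref{lem_dingconds} then yields the claimed parameters $[3^m-1,3^m-1-2m,4]$.

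The only real obstacle in the overall package lives inside Lemma \ref{lem_main1}; the hypotheses $m\equiv 7,11\pmod{12}$ and $\gcd(m,13)=1$ are tuned precisely so that Lemma \ref{lem_irr} rules out roots in $\F_{3^m}\setminus\F_3$ coming from the irreducible factors of degrees $9$ and $13$ appearing in the Magma decomposition there (the degree-$9$ factors are blocked by $\gcd(m,3)=1$, which follows from the residue conditions mod $12$). Given that lemma, the theorem itself is a clean assembly, and the key conceptual step is just recognizing the $\mathrm{char}=3$ factorization that lets one equation encode both C2 and C3.
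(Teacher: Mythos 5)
Your proposal is correct and matches the paper's intended argument exactly: the paper derives Theorem \ref{thm_main0} directly from Lemmas \ref{lem_main0} and \ref{lem_main1}, with the difference-of-squares factorization $(x+1)^{2e}-(x^e+1)^2=\bigl((x+1)^e-x^e-1\bigr)\bigl((x+1)^e+x^e+1\bigr)$ reducing C2 and C3 to the single equation treated in Lemma \ref{lem_main1}, plus the finite check over $\F_3$. (Minor remark: that factorization is an ordinary difference of squares, not specific to characteristic $3$.)
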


Similarly to proving Theorem \ref{thm_main0}, we can prove the following Theorem \ref{thm_main01}.

\begin{theorem}\label{thm_main01}
Let $m>1$ be an odd integer with $m\equiv 1\pmod 6$ and $\gcd(m, 235)=1$. Let $e=\frac{3^{\frac{m+2}{3}}+5}{2}$. Then the ternary cyclic code $\mc{C}_{(1,e)}$ has parameters $[3^m-1,3^m-1-2m,4]$. 
\end{theorem}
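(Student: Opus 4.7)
The plan is to imitate the two-step strategy used for Theorem \ref{thm_main0}, replacing $h=(m+3)/2$ by $h=(m+2)/3$ throughout. I first prove the dimension analog of Lemma \ref{lem_main0}, then verify conditions C2 and C3 of Lemma \ref{lem_dingconds} via the analog of Lemma \ref{lem_main1}. The key arithmetic identity changes from $2h=m+3$ to $3h=m+2$, which will require three, rather than two, iterations of a Frobenius-type substitution.

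For the dimension step, $m\equiv 1\pmod 6$ ensures $h=(m+2)/3$ is an odd integer, so $3^h+5$ is divisible by $8$ and $e=(3^h+5)/2$ is even, hence $e\notin C_1$. Using $3h=m+2$, reducing $3^{3h}\equiv(-5)^3\pmod{3^h+5}$ yields
\[\gcd(3^h+5,3^m-1)=\gcd(3^h+5,3^{3h}-9)=\gcd(3^h+5,2\cdot 67);\]
since $\mathrm{ord}_{67}(3)=22$ and $3^8\equiv -5\pmod{67}$, the condition $67\mid 3^h+5$ reduces to $h\equiv 8\pmod{22}$, equivalently $m\equiv 22\pmod{66}$, which is incompatible with $m\equiv 1\pmod 6$. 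Consequently $\gcd(e,3^m-1)=2$, and Lemma \ref{lem_Cesize} gives $|C_e|=m$.

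For the minimum-distance step, the symbolic derivation of Lemma \ref{lem_main1} is unchanged since $2e=3^h+5$, so the equation $(x+1)^{2e}=(x^e+1)^2$ reduces to
\[(x^4-x^3-x^2+x-1)(x^{3^h-1}-1)=x^2(x^{(3^h-1)/2}-1).\]
If $\theta\in\F_{3^m}\setminus\F_3$ is a solution with $\theta^{(3^h-1)/2}=1$, then $\theta^{3^{\gcd(h,m)}-1}=1$, and the identity $3h-m=2$ together with $m$ odd forces $\gcd(h,m)=1$, whence $\theta^2=1$ and $\theta\in\F_3$, a contradiction. In the other case I recover $\theta^{3^h}=f(\theta)/g(\theta)$ with the same $f,g$ as in Lemma \ref{lem_main1}; applying the $3^h$-Frobenius three times, together with $\theta^{3^{3h}}=\theta^{3^{m+2}}=\theta^9$, produces a polynomial identity $F(\theta)-\theta^9\,G(\theta)=0$ in $\theta$ alone.

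The final step is to factor $F(\theta)-\theta^9\,G(\theta)$ over $\F_3$ with Magma. I expect only linear factors (corresponding to $\theta\in\F_3$) together with nonlinear irreducible factors whose degrees have all prime factors in $\{2,3,5,47\}$. By Lemma \ref{lem_irr}, such a factor of degree $n$ admits a root in $\F_{3^m}$ iff $n\mid m$; the assumptions $m\equiv 1\pmod 6$ (so $\gcd(m,6)=1$) and $\gcd(m,235)=1$ together exclude every such $n>1$. The principal obstacle will be this symbolic computation: three nested substitutions of $f/g$ produce a polynomial of degree in the several hundreds, so identifying its precise irreducible factors — and thereby pinning down the exact coprimality hypotheses on $m$ — is both the crucial step and the most computationally delicate one.
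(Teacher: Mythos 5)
Your proposal follows exactly the route the paper intends: the paper's own ``proof'' of Theorem \ref{thm_main01} is just the remark that it is proved similarly to Theorem \ref{thm_main0}, and your blueprint correctly adapts both steps, including the extra subtlety in the gcd computation (the prime $67$, ruled out since $h$ odd forces $h\not\equiv 8\pmod{22}$) and the need for three Frobenius substitutions ending in $\theta^{3^{3h}}=\theta^{9}$. The only piece you leave unexecuted is the final Magma factorization of $F(\theta)-\theta^{9}G(\theta)$, which is precisely the computation the paper also omits and which must produce the primes $5$ and $47$ accounting for the hypothesis $\gcd(m,235)=1$.
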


\section{The third and fourth new families of optimal ternary cyclic codes}\label{sec_newcodes}

In this section, we present the third and fourth families of optimal ternary cyclic codes $\mc{C}_{(1,e)}$ with respect to the sphere packing bound. In both families, the exponents $e$ satisfy $e(3^h\pm 1)\equiv \frac{3^m-a}{2}\pmod{3^m-1}$ where $0\leq h\leq m-1$ and $a\equiv 3\pmod{4}$. Several recent work are special cases of our results.

\subsection{The third family of optimal ternary cyclic codes with minimum distance four}

In this subsection, we consider the exponents $e$ as the form of
\begin{equation}\label{eq_e1}
e(3^h-1)\equiv \frac{3^m-a}{2}\pmod{3^m-1},
\end{equation}
where $m,h$ and $a$ are integers with $1\leq h\leq m-1$ and $a\equiv 3\pmod{4}$. We first provide a necessary condition for the existence of such code $\mc{C}_{(1,e)}$ with parameters $[3^m-1,3^m-1-2m,4]$. Then we present a new family of ternary cyclic codes $\mc{C}_{(1,e)}$.

\begin{lemma}\label{lem_e1_exist}
Let $m, h$ and $a$ be integers with $1\leq h\leq m-1$ and $a\equiv 3\pmod{4}$. Then there exists an even integer $e$ satisfying 
\begin{equation*}
e(3^h-1)\equiv \frac{3^m-a}{2}\pmod{3^m-1}
\end{equation*}
if and only if $m$ is odd and $(3^{\gcd(h,m)}-1) \mid (a-1)$.
\end{lemma}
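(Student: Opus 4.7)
The plan is to convert the question about existence of an even $e$ into a standard divisibility analysis, then do careful parity bookkeeping. Let $d = \gcd(h, m)$ and $M = (3^m - 1)/(3^d - 1)$. By Lemma \ref{lem_gcd}(1), $\gcd(3^h - 1, 3^m - 1) = 3^d - 1$, so the congruence $e(3^h - 1) \equiv \frac{3^m - a}{2} \pmod{3^m - 1}$ is solvable iff $(3^d - 1) \mid \frac{3^m - a}{2}$, and when solvable the full solution set modulo $3^m - 1$ is an arithmetic progression of length $3^d - 1$ with common difference $M$. Since $3^m - 1$ is even, each residue class mod $3^m - 1$ has a well-defined parity, so it makes sense to ask which of the $3^d - 1$ solutions are even.

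For the forward direction, I would argue that if any even $e$ satisfies the congruence, then in particular $(3^d - 1) \mid \frac{3^m - a}{2}$; since $3^d - 1$ is even this forces $4 \mid (3^m - a)$. Combined with $3^m \equiv (-1)^m \pmod 4$ and $a \equiv -1 \pmod 4$, this gives $(-1)^m + 1 \equiv 0 \pmod 4$, i.e.\ $m$ is odd. Subtracting $(3^d - 1) \mid (3^m - 1)$ from $(3^d - 1) \mid (3^m - a)$ yields $(3^d - 1) \mid (a - 1)$.

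For the reverse direction, assume $m$ is odd and $(3^d - 1) \mid (a - 1)$; then $d$ and $m/d$ are both odd. The crux is to verify two parity facts: that both $M$ and $\ell := (a - 1)/(3^d - 1)$ are odd. The first follows because $M = 1 + 3^d + 3^{2d} + \cdots + 3^{(m/d - 1)d}$ is a sum of $m/d$ odd summands and $m/d$ is odd. The second follows because $d$ odd gives $3^d \equiv 3 \pmod 4$, so $(3^d - 1)/2$ is odd, and $a \equiv 3 \pmod 4$ makes $(a - 1)/2$ odd; hence $\ell = [(a - 1)/2]/[(3^d - 1)/2]$ is a ratio of odd integers, which is odd. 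Plugging in, $\frac{3^m - a}{2(3^d - 1)} = \frac{M - \ell}{2}$ is an integer, so the congruence is solvable; and since the spacing $M$ is odd and there are at least $3^d - 1 \geq 2$ solutions in a period, both parities appear among them, producing an even $e$.

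The main (entirely mild) obstacle is precisely this parity bookkeeping: every arithmetic step is elementary, but one must simultaneously track how the hypotheses $m$ odd and $a \equiv 3 \pmod 4$ conspire to make both $M$ and $\ell$ odd, so that the factor of $\frac{1}{2}$ in the right-hand side lines up with the even integer $3^d - 1$ and with the parity constraint on $e$.
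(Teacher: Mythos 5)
Your proof is correct and follows essentially the same route as the paper: reduce the existence of an even solution to a divisibility condition coming from the standard solvability criterion for linear congruences (with $\gcd(3^h-1,3^m-1)=3^{\gcd(h,m)}-1$ from Lemma \ref{lem_gcd}), then track $2$-adic parities using $m$ odd $\Leftrightarrow 3^m-1\equiv 2\pmod 4$ and $a\equiv 3\pmod 4$. The only cosmetic difference is that the paper encodes evenness of $e$ upfront via the substitution $e=2k$, arriving at the single condition $\gcd(2(3^h-1),3^m-1)\mid\frac{3^m-a}{2}$, whereas you first solve the congruence and then observe that the solution set is a progression with odd common difference $M$, so both parities occur; both versions are sound.
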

\begin{proof}
We first note that the congruence $e(3^h-1)\equiv \frac{3^m-a}{2}\pmod{3^m-1}$ has an even solution for $e$ if and only if $2k(3^h-1)\equiv \frac{3^m-a}{2}\pmod{3^m-1}$ has a solution for $k$. The latter condition holds if and only if 
\begin{equation}\label{eq_cr1_1}
\gcd(2(3^h-1),3^m-1)\left\vert \frac{3^m-a}{2}\right..
\end{equation}
If $m$ is even, then $3^m-a\equiv 2\pmod{4}$ which implies that $\frac{3^m-a}{2}$ is odd. But $\gcd(2(3^h-1),3^m-1)$ is even. Thus, $\gcd(2(3^h-1),3^m-1)\nmid \frac{3^m-a}{2}$, which implies that $e$ cannot exist. 

Next, we consider the case when $m$ is odd. We have $3^m-1\equiv 2\pmod{4}$ which means that $(3^m-1)_2=2$. Thus by Lemma \ref{lem_gcd}, we have 
\[\gcd(2(3^h-1),3^m-1)=\gcd(3^h-1,3^m-1)=3^{\gcd(h,m)}-1.\] 
So \eqref{eq_cr1_1} holds if and only if
\begin{equation}\label{eq_cr1_2}
\gcd\left(3^{\gcd(h,m)}-1,\frac{3^m-a}{2}\right)=3^{\gcd(h,m)}-1.
\end{equation}
Since $m$ is odd, then $\gcd(h,m)$ is odd which implies that $3^{\gcd(h,m)}-1\equiv 2\pmod{4}$. Note that $3^m-a\equiv 0\pmod{4}$. Then \eqref{eq_cr1_2} holds if and only if
\begin{equation}\label{eq_cr1_3}
\gcd(3^{\gcd(h,m)}-1,3^m-a)=3^{\gcd(h,m)}-1.
\end{equation}
Since $\gcd(h,m)\mid m$, then $(3^{\gcd(h,m)}-1)\mid (3^m-1)$. So we have 
\[\gcd(3^{\gcd(h,m)}-1,3^m-a)=\gcd(3^{\gcd(h,m)}-1,3^m-1+1-a)=\gcd(3^{\gcd(h,m)}-1,a-1).\] Therefore \eqref{eq_cr1_3} holds if and only if $\gcd(3^{\gcd(h,m)}-1,a-1)=3^{\gcd(h,m)}-1$ which is equivalent to $(3^{\gcd(h,m)}-1)\mid (a-1)$. This completes the proof.
\end{proof}

We will construct a family of optimal ternary cyclic codes $\mc{C}_{(1,e)}$, where $e$ satisfies \eqref{eq_e1} with 
\[a=2\cdot 3^t\delta+1,\] 
where $t$ is a nonnegative integer and $\delta\in\{1,-1\}$. 

It is clear that $a=2\cdot 3^t\delta+1\equiv 3\pmod{4}$. For these values of $a$, we are able to reformulate the conditions for the existence of an even integer $e$ in Lemma \ref{lem_e1_exist} in the following simpler way. This is done by noting that $a-1=2\cdot 3^t\delta$ and $\gcd(3,3^{\gcd(h,m)}-1)=1$.

%From Lemma \ref{lem_e1_exist} by noting that $a-1=2\cdot 3^t\delta$ and $\gcd(3,3^{\gcd(h,m)}-1)=1$, we obatin the following conclusion.

\begin{lemma}\label{lem_e1_exist2}
Let $m$ and $h$ be integers with $1\leq h\leq m-1$. Let $\delta\in\{1,-1\}$ and let $a=2\cdot 3^t\delta+1$ where $t$ is a nonnegative integer. Then there exists an even integer $e$ satisfying \eqref{eq_e1} if and only if $m$ is odd and $\gcd(h,m)=1$.
\end{lemma}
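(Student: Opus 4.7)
The plan is to apply Lemma \ref{lem_e1_exist} directly to the specific choice $a = 2\cdot 3^t\delta+1$. To do so, I first need to verify that this $a$ satisfies the hypothesis $a\equiv 3\pmod 4$ of that lemma. Since $3^t$ is odd and $\delta=\pm 1$, we have $2\cdot 3^t\delta\equiv 2\pmod 4$ regardless of the sign of $\delta$, so $a\equiv 3\pmod 4$ as required. Lemma \ref{lem_e1_exist} then says that an even solution $e$ to \eqref{eq_e1} exists if and only if $m$ is odd and $(3^{\gcd(h,m)}-1)\mid(a-1)$.

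Next I would simplify the divisibility condition for our particular $a$. Here $a-1 = 2\cdot 3^t\delta$, so the condition becomes $(3^{\gcd(h,m)}-1)\mid 2\cdot 3^t\delta$. The key observation is that $3^{\gcd(h,m)}-1$ is coprime to $3$ (indeed $3^{\gcd(h,m)}-1\equiv -1\pmod 3$), and it is coprime to $\delta=\pm 1$ trivially, so the divisibility reduces to $(3^{\gcd(h,m)}-1)\mid 2$. Since $3^{\gcd(h,m)}-1\geq 2$ with equality only when $\gcd(h,m)=1$, this forces $\gcd(h,m)=1$.

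Conversely, if $m$ is odd and $\gcd(h,m)=1$, then $3^{\gcd(h,m)}-1=2$ which clearly divides $a-1=2\cdot 3^t\delta$, so Lemma \ref{lem_e1_exist} guarantees an even $e$ satisfying \eqref{eq_e1}. This establishes the equivalence.

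Since the argument is essentially a direct reduction to Lemma \ref{lem_e1_exist} together with a one-line coprimality observation, I do not foresee any real obstacle. The only point that warrants care is keeping track of the two signs $\delta=\pm 1$ uniformly (both cases are handled at once by noting that $|a-1|=2\cdot 3^t$ and that the sign is irrelevant for divisibility), and noting that the hypothesis $a\equiv 3\pmod 4$ of the previous lemma is automatic here rather than an additional assumption we must impose.
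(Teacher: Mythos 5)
Your proposal is correct and follows exactly the paper's route: the paper derives this lemma from Lemma \ref{lem_e1_exist} by noting $a-1=2\cdot 3^t\delta$ and $\gcd(3,3^{\gcd(h,m)}-1)=1$, which is precisely your reduction of the divisibility condition to $(3^{\gcd(h,m)}-1)\mid 2$. Your explicit check that $a\equiv 3\pmod 4$ is a worthwhile detail the paper leaves implicit.
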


Next, we verify that an even solution $e$ of \eqref{eq_e1} satisfies that $e\notin C_1$ and $|C_e|=m$.

\begin{lemma}\label{lem_econd1}
Let $m>1$ be an odd integer and let $h$ be an integer such that $1\leq h\leq m-1$ and $\gcd(h,m)=1$. Let $\delta\in\{1,-1\}$ and let $a=2\cdot 3^t\delta+1$ where $t$ is a nonnegative integer. Suppose that $e$ is an even integer satisfying \eqref{eq_e1}, then $e\notin C_1$ and $|C_e|=m$. 
\end{lemma}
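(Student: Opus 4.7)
The plan is to verify the two claims $e\notin C_1$ and $|C_e|=m$ directly from the hypotheses, leveraging Lemma~\ref{lem_Cesize} for the latter.

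For the claim $e\notin C_1$, I would observe that $C_1=\{3^0,3^1,\ldots,3^{m-1}\}\pmod{3^m-1}$. Each power $3^i$ with $0\le i\le m-1$ lies in the interval $[1,3^m-2]$ and is therefore its own representative modulo $3^m-1$; in particular each element of $C_1$ is odd. Since $e$ is assumed even and reduction modulo $3^m-1$ preserves parity (as $3^m-1$ is even), $e$ cannot coincide with any member of $C_1$.

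For the claim $|C_e|=m$, my strategy is to show that $\gcd(e,3^m-1)=2$ and then invoke Lemma~\ref{lem_Cesize}. Set $d=\gcd(e,3^m-1)$; since both $e$ and $3^m-1$ are even, $d$ is even. Writing the congruence \eqref{eq_e1} as an equality $e(3^h-1)=\frac{3^m-a}{2}+k(3^m-1)$ for some integer $k$, the facts $d\mid e$ and $d\mid 3^m-1$ force $d$ to divide $\frac{3^m-a}{2}$, and hence also $3^m-a$. Subtracting from $d\mid 3^m-1$ gives
\[
d \,\Big|\, (3^m-1)-(3^m-a)=a-1=2\cdot 3^t\delta.
\]
Since $d\mid 3^m-1$ and $\gcd(3^m-1,3)=1$, we have $\gcd(d,3)=1$, so the factor $3^t$ plays no role and $d\mid 2$. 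Combined with $2\mid d$, this yields $d=2$, so Lemma~\ref{lem_Cesize} delivers $|C_e|=m$.

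The argument is almost entirely bookkeeping; the only subtle point is recognising that the specific shape $a-1=2\cdot 3^t\delta$ is what makes the reduction clean, because the prime-to-$3$ part of $a-1$ is exactly $2$, which matches the even contribution already present in $d$. There is no genuine obstacle beyond being careful that the implication $d\mid\frac{3^m-a}{2}\Rightarrow d\mid 3^m-a$ (rather than the other direction) is the one needed, and that $\gcd(d,3)=1$ eliminates the $3^t$ factor before concluding $d=2$.
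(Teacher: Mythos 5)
Your proof is correct and follows essentially the same route as the paper: both arguments reduce the problem to showing $\gcd(e,3^m-1)=2$ by exploiting the congruence \eqref{eq_e1} together with $a-1=2\cdot 3^t\delta$ and $\gcd(3,3^m-1)=1$, and then invoke Lemma~\ref{lem_Cesize}. The only cosmetic difference is that the paper computes the chain $\gcd(e(3^h-1),3^m-1)=\gcd\bigl(\tfrac{3^m-a}{2},3^m-1\bigr)=\cdots=2$ directly (using the $2$-adic valuations coming from $m$ odd), whereas you run the equivalent divisibility argument on $d=\gcd(e,3^m-1)$, which needs only the easy implication $d\mid\tfrac{3^m-a}{2}\Rightarrow d\mid 3^m-a$.
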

\begin{proof}
Since $e$ is even, it is clear that $e\notin C_1$. Since $m$ is odd, then $3^m-1\equiv 2\pmod{4}$ and $3^m-a\equiv 0\pmod{4}$. Thus 
\begin{align*}
&\gcd(e(3^h-1),3^m-1)=\gcd\left(\frac{3^m-a}{2},3^m-1\right)\\
=&\gcd(3^m-a,3^m-1)=\gcd(a-1,3^m-1)=\gcd(2\cdot 3^t,3^m-1)\\
=&2.
\end{align*}
This implies that $\gcd(e,3^m-1)=2$. By Lemma \ref{lem_Cesize}, we have $|C_e|=m$. This completes the proof.
\end{proof}

The following is a technical lemma. 

\begin{lemma}\label{lem_gcdcond}
Let $m>1$ be an odd integer and let $h$ be an integer such that $1\leq h\leq m-1$ and $\gcd(h,m)=1$. Let $\delta\in\{1,-1\}$ and let $a=2\cdot 3^t\delta+1$ where $t$ is a nonnegative integer. Suppose that $e$ is an even integer satisfying \eqref{eq_e1}, then $\gcd(e-3^t\delta,3^m-1)=1$. 
\end{lemma}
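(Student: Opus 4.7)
The plan is to multiply $e - 3^t\delta$ by $3^h - 1$, reduce the product modulo $3^m - 1$ using \eqref{eq_e1}, and then rule out common prime divisors of $e - 3^t\delta$ and $3^m - 1$ one prime at a time.

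First I would substitute $a = 2\cdot 3^t\delta + 1$ into \eqref{eq_e1}, rewriting it in the cleaner form
\[
e(3^h - 1) \equiv \frac{3^m - 1}{2} - 3^t\delta \pmod{3^m - 1}.
\]
The key computation is then
\[
(3^h - 1)(e - 3^t\delta) \;=\; e(3^h - 1) - 3^t\delta(3^h - 1) \;\equiv\; \frac{3^m - 1}{2} - 3^{t+h}\delta \pmod{3^m - 1}.
\]

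Next I would argue prime-by-prime. Suppose a prime $p$ divides $\gcd(e - 3^t\delta, 3^m - 1)$. If $p$ is odd, then since $m$ is odd we have $3^m - 1 \equiv 2 \pmod 4$, so $(3^m-1)_2 = 2$ and every odd prime divisor of $3^m - 1$ divides $\frac{3^m-1}{2}$; in particular $p \mid \frac{3^m - 1}{2}$. Combined with the displayed congruence, this forces $p \mid 3^{t+h}\delta$, which is impossible because $\gcd(3, 3^m - 1) = 1$ and $\delta = \pm 1$. If $p = 2$, then $e$ is even by hypothesis while $3^t\delta$ is odd, so $e - 3^t\delta$ is odd, again a contradiction. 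Hence no prime divides both $e - 3^t\delta$ and $3^m - 1$, proving the claim.

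I do not anticipate a serious obstacle: the argument is essentially one algebraic identity followed by a two-case parity analysis on a would-be common prime. The only subtlety worth flagging is the invocation of $(3^m - 1)_2 = 2$ (which needs $m$ odd, so $3^m \equiv 3 \pmod 4$); this fact is already in play when establishing Lemma \ref{lem_econd1}, so using it here is natural.
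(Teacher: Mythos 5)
Your argument is correct, but it follows a genuinely different route from the paper's. The paper simply rearranges \eqref{eq_e1} (with $\frac{3^m-a}{2}=\frac{3^m-1}{2}-3^t\delta$) to get the exact congruence $e-3^t\delta\equiv e3^h-\frac{3^m-1}{2}\pmod{3^m-1}$, observes that this quantity is odd because $e$ is even and $\frac{3^m-1}{2}$ is odd, and thereby collapses the gcd via the chain $\gcd(e-3^t\delta,3^m-1)=\gcd(e3^h-\frac{3^m-1}{2},\frac{3^m-1}{2})=\gcd(e,\frac{3^m-1}{2})=1$, the last step using $\gcd(e,3^m-1)=2$ from the proof of Lemma~\ref{lem_econd1}. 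You instead multiply $e-3^t\delta$ by $3^h-1$ and run a prime-by-prime contradiction; introducing the factor $3^h-1$ (which may share divisors with $3^m-1$) is harmless because your argument only needs the one direction that a common prime of $e-3^t\delta$ and $3^m-1$ divides the product. What your approach buys is self-containedness: you never need $\gcd(e,3^m-1)=2$, only that $3\nmid 3^m-1$ and the parity of $e$. What the paper's buys is a shorter identity chain with no case analysis. One small remark: your appeal to $(3^m-1)_2=2$ is superfluous at the point where you use it --- any odd prime divisor of $3^m-1$ divides $\frac{3^m-1}{2}$ simply because $3^m-1=2\cdot\frac{3^m-1}{2}$, regardless of the $2$-part.
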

\begin{proof}
By \eqref{eq_e1}, we deduce that
\[e-3^t\delta\equiv e3^h-\frac{3^m-1}{2}\pmod{3^m-1}.\]
Thus, it follows that 
\[\gcd(e-3^t\delta,3^m-1)=\gcd\left(e3^h-\frac{3^m-1}{2},3^m-1\right).\]
Since $\frac{3^m-1}{2}$ is odd, the above equation is reduced to 
\[\gcd(e-3^t\delta,3^m-1)=\gcd\left(e3^h-\frac{3^m-1}{2},\frac{3^m-1}{2}\right)=\gcd\left(e3^h,\frac{3^m-1}{2}\right)=\gcd\left(e,\frac{3^m-1}{2}\right).\]
Recall from the proof of Lemma \ref{lem_econd1} that $\gcd(e,3^m-1)=2$. We conclude that
\[\gcd(e-3^t\delta,3^m-1)=\gcd\left(e,\frac{3^m-1}{2}\right)=1.\]
This completes the proof.
\end{proof}

We are ready to present our first new family of optimal ternary cyclic codes.

\begin{theorem}\label{thm_main1}
Let $m>1$ be an odd integer, and let $h$ be an integer such that $1\leq h\leq m-1$ and $\gcd(h,m)=1$. Let $\delta\in\{1,-1\}$ and let $a=2\cdot 3^t\delta+1$ where $t$ is a nonnegative integer. Let $e$ be an even integer satisfying the following conditions: \vspace{0.3em}

{\rm 1)} $1<e<3^m-1$; \vspace{0.3em}

{\rm 2)} $e(3^h-1)\equiv \frac{3^m-a}{2}\pmod{3^m-1}$; \vspace{0.3em}

{\rm 3)} $\gcd(e-\frac{\delta+3}{2}\cdot 3^t,3^m-1)=\frac{\delta+3}{2}$. \vspace{0.3em}

Then the ternary cyclic code $\mc{C}_{(1,e)}$ has parameters $[3^m-1,3^m-1-2m,4]$.
\end{theorem}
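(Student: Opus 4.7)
The plan is to verify the three conditions of Lemma~\ref{lem_dingconds}. Condition C1 ($e$ even) is immediate, and Lemma~\ref{lem_econd1} gives $e \notin C_1$ and $|C_e| = m$, so the work is to establish C2 and C3. The central tool will be the identity
\[
y^{e(3^h - 1)} \;=\; y^{(3^m - a)/2} \;=\; \eta(y)\, y^{-3^t \delta} \qquad (y\in\F_{3^m}^*),
\]
which follows from condition~(2) using $a = 2\cdot 3^t\delta + 1$ and $\eta(y) = y^{(3^m-1)/2}$, and is equivalent to $y^{e\cdot 3^h} = \eta(y)\, y^{e - 3^t\delta}$.

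For C3, I would first check directly that $x=0$ is a solution and $x=-1$ is not (since $(-1)^e=1$). For $x\in\F_{3^m}\setminus\{0,-1\}$ with $(x+1)^e = x^e + 1$, applying the $3^h$-Frobenius to the equation and invoking the identity on both $x$ and $x+1$ produces
\[
\eta(x+1)(x+1)^{e-3^t\delta} \;=\; \eta(x)\, x^{e-3^t\delta} + 1.
\]
I would then split into the four cases indexed by $(\eta(x),\eta(x+1))\in\{\pm1\}^2$. In each case I would multiply by $(x+1)^{3^t\delta}$ (clearing denominators if $\delta=-1$), use the characteristic-$3$ identity $(x+1)^{3^t}=x^{3^t}+1$, and substitute the original equation to eliminate $x^e$ and $(x+1)^e$. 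After simplification in $\F_3$, each case should reduce to one of the following: an equation of the form $x^{e-2\cdot 3^t}=-1$ (when $\delta=1$) or $x^{e-3^t}=-1$ (when $\delta=-1$); an equation forcing $x=1$ (excluded since $x=1$ does not satisfy the C3 equation); or an equation in $x^{3^t}$ and $x^{3^{t+h}}$ that can be resolved by a second application of the identity. Equations of the first type are impossible by hypothesis~(3), which places the image of $y\mapsto y^{e-\frac{\delta+3}{2}\cdot 3^t}$ inside the squares of $\F_{3^m}^*$, while $-1$ is a nonsquare because $m$ is odd. Lemma~\ref{lem_gcdcond} will then make $y\mapsto y^{e-3^t\delta}$ a bijection, which, combined with the $\eta$-constraints of each case, should rule out the remaining subcases.

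For C2 the argument is parallel. I would verify directly that $x=1$ satisfies $(x+1)^e+x^e+1=0$ (since $e$ even implies $2^e=(-1)^e=1$, so $2^e+2=0$ in $\F_3$) and that $x\in\{0,-1\}$ are not solutions. For $x\notin\{0,1,-1\}$, the same Frobenius manipulation applied to $(x+1)^e=-(x^e+1)$ yields the sign-reversed analogue of the displayed identity, after which the identical four-case reduction completes the proof. The hard part will be the case analysis itself: among the eight total cases, half are killed cleanly by the square/nonsquare argument, but the rest demand careful bookkeeping of signs and $\eta$-values and coordination between the $\delta=1$ and $\delta=-1$ branches, with a second round of the Frobenius trick needed in the subcases where the single-round reduction is not immediately conclusive.
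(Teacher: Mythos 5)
Your plan reproduces the paper's proof essentially step for step: the reduction to C2 and C3 via Lemma \ref{lem_dingconds} and Lemma \ref{lem_econd1}, the same key identity obtained from condition (2) (the paper raises $(x+1)^e=\pm(x^e+1)$ to the $(3^h-1)$-th power, which is the same manipulation as your Frobenius-plus-identity step and has the added convenience that the even exponent $3^h-1$ absorbs the $\pm$ so both C2 and C3 are handled by one equation), the four-case split on $(\eta(x),\eta(x+1))$, and the same use of hypothesis (3), Lemma \ref{lem_gcdcond}, $\gcd(e,3^m-1)=2$, and the nonsquareness of $-1$. One small correction to keep in mind when you write out the cases: for $\delta=-1$ hypothesis (3) gives $\gcd(e-3^t,3^m-1)=1$, so $y\mapsto y^{e-3^t}$ is a \emph{bijection} of $\F_{3^m}^*$ rather than a map into the squares, and the equation $x^{e-3^t}=-1$ in that branch must instead be ruled out either by the case assumption $\eta(x)=1$ (a square cannot power to the nonsquare $-1$) or by bijectivity forcing the unique solution to lie in $\F_3$.
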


\begin{proof}
By Lemmas \ref{lem_dingconds} and \ref{lem_econd1}, it suffices to show that the conditions C2 and C3 in Lemma \ref{lem_dingconds} hold. It is readily seen that $x=1$ is the only solution of $(x+1)^e+x^e+1=0$ in $\F_3$, and $x=0$ is the only solution of $(x+1)^e-x^e-1=0$ in $\F_3$. We shall prove that 
$$(x+1)^e=\pm(x^e+1)$$ 
has no solutions in $\F_{3^m}\setminus\F_3$. Suppose to the contrary that there exists $\theta\in\F_{3^m}\setminus\F_3$ such that $(\theta+1)^e=\pm(\theta^e+1)$. Raising both sides to the $(3^h-1)$-th power, we have
\[(\theta+1)^{e(3^h-1)}=(\theta^e+1)^{3^h-1}.\]
By \eqref{eq_e1}, the above equation reduces to 
\begin{equation}\label{eq_1}
(\theta+1)^{\frac{3^m-1}{2}-3^t\delta}(\theta^e+1)=\theta^{\frac{3^m-1}{2}-3^t\delta+e}+1.
\end{equation}
Let $\eta$ be the quadratic character of $\F_{3^m}$. Then \eqref{eq_1} translates to
\begin{equation}\label{eq_2}
\eta(\theta+1)(\theta+1)^{-3^t\delta}(\theta^e+1)=\eta(\theta)\theta^{-3^t\delta}\theta^e+1.
\end{equation}
We derive a contradiction from \eqref{eq_2} in the following four cases.

{\bf Case 1}. $\eta(\theta+1)=\eta(\theta)=1$. 

If $\delta=1$, then \eqref{eq_2} is reduced to $\theta^{3^t}+\theta^{-3^t+e}=0$, which is equivalent to $\theta^{e-2\cdot 3^t}=-1$. But this is contrary to the assumption that $\eta(\theta)=1$.

If $\delta=-1$, then \eqref{eq_2} is reduced to $\theta^{3^t}+\theta^e=0$, which is equivalent to $\theta^{e-3^t}=-1$. This is again contrary to the assumption that $\eta(\theta)=1$.

%
%or equivalently,
%\[\theta^{-3^t\delta}+\theta^e=0.\]
%Since $\theta\neq 0$, then we have $\theta^{e+3^t\delta}=-1$. By Lemma \ref{lem_gcd}, we have $\theta=-1$ which is a contradiction.

{\bf Case 2}. $\eta(\theta+1)=\eta(\theta)=-1$. 

If $\delta=1$, then \eqref{eq_2} is reduced to
\[\theta^{-3^t+e}+1=\theta^{3^t},\]
which is equivalent to $\theta^e+1=(\theta^{3^t}+1)^2$. Thus 
\[(\theta+1)^e=\pm(\theta^e+1)=\pm(\theta+1)^{2\cdot 3^t}.\]
Since $e$ is even and $\eta(-1)=-1$, we deduce that
\[(\theta+1)^{e-2\cdot 3^t}=1.\]
Since $\gcd(e-2\cdot 3^t,3^m-1)=2$, then $\theta+1=\pm 1$, which leads to a contradiction. 

If $\delta=-1$, then \eqref{eq_2} is reduced to 
\[\theta^e+1=-(\theta^{3^t}+1).\]
Thus
\[(\theta+1)^e=\pm(\theta^e+1)=\mp(\theta+1)^{3^t}.\]
Since $\eta(\theta+1)=-1$ and $\eta(-1)=-1$, we deduce that 
\[(\theta+1)^{e-3^t}=-1.\]
Since $\gcd(e-3^t,3^m-1)=1$, then $\theta+1=-1$, which leads to a contradiction. 

{\bf Case 3}. $\eta(\theta+1)=1$, and $\eta(\theta)=-1$. 

If $\delta=1$, then \eqref{eq_2} is reduced to 
\begin{equation}\label{eq_f1c3_1}
\theta^{2\cdot 3^t}+\theta^{e+3^t}-\theta^e=0,
\end{equation} 
which implies that 
\[\theta^e+1=-\frac{(\theta^{3^t}+1)^2}{\theta^{3^t}-1}.\]
Thus
\[(\theta+1)^e=\pm(\theta^e+1)=\mp \frac{(\theta^{3^t}+1)^2}{\theta^{3^t}-1}.\]
On the other hand, we deduce from \eqref{eq_f1c3_1} that $\theta^{3^t}-1=-\theta^{2\cdot 3^t-e}$. Noting that $\eta(\theta+1)=-\eta(\theta)$ and $\eta(-1)=-1$, we have
\[\left(\frac{\theta+1}{\theta}\right)^{e-2\cdot 3^t}=1.\]
Since $\gcd(e-2\cdot 3^t,3^m-1)=2$, the above equation implies that $\theta+1=\pm \theta$, which leads to a contradiction.

If $\delta=-1$, then \eqref{eq_2} is reduced to 
\begin{equation}\label{eq_case3_1}
\theta^{3^t+e}-\theta^{3^t}-\theta^e=0.
\end{equation}
On the one hand, we can deduce that 
\[(\theta^{3^t}-1)(\theta^e+1)=-\theta^{3^t}-1.\]
Thus 
\begin{equation}\label{eq_case3_2}
\theta^e+1=-\frac{(\theta+1)^{3^t}}{\theta^{3^t}-1}.
\end{equation}
On the other hand, we deduce from \eqref{eq_case3_1} that $\theta^e-\theta^{e-3^t}=1$. So 
\begin{equation}\label{eq_case3_3}
\theta^{e-3^t}(\theta^{3^t}-1)=1.
\end{equation}
By \eqref{eq_case3_2} and \eqref{eq_case3_3}, we have
\[(\theta+1)^e=\pm (\theta^e+1)=\mp (\theta+1)^{3^t}\theta^{e-3^t}.\]
Noting that $\eta(\theta+1)=-\eta(\theta)$ and $\eta(-1)=-1$, we have
\[\left(\frac{\theta+1}{\theta}\right)^{e-3^t}=-1.\]
Since $\gcd(e-3^t,3^m-1)=1$, we deduce that $\theta+1=-\theta$, which leads to a contradiction. 

{\bf Case 4}. $\eta(\theta+1)=-1$ and $\eta(\theta)=1$. 

If $\delta=1$, then \eqref{eq_2} is reduced to $(\theta^{3^t}-1)(\theta^{e-3^t}-1)=0$. Since $\theta\neq 1$, then $\theta^{e-3^t}-1=0$. By Lemma \ref{lem_gcdcond}, we have $\theta=1$, which leads to a contradiction.

If $\delta=-1$, then \eqref{eq_2} is reduced to 
\[(\theta^{3^t}-1)(\theta^e-1)=0.\]
Since $\theta\neq 1$, we have $\theta^e=1$. Recall from the proof of Lemma \ref{lem_econd1} that $\gcd(e, 3^m-1)=2$. We conclude that $\theta=\pm 1$, which leads to a contradiction.  

In the four cases discussed above, we have demonstrated that $(x+1)^e=\pm(x^e+1)$ has no solutions in $\F_{3^m}\setminus\F_3$, and therefore conditions C2 and C3 of Lemma \ref{lem_dingconds} are satisfied. The proof is now complete.
\end{proof}

\begin{remark}
The condition $\gcd(e-\frac{\delta+3}{2}\cdot 3^t,3^m-1)=\frac{\delta+3}{2}$ in Theorem \ref{thm_main1} cannot be removed. See the following counterexamples. 

When $\delta=1$ and $t=0$, we have $a=3$. For $m=11, h=4$, we take $e=129538$. Then $\gcd(e-2,3^m-1)=46$. We check by Magma \cite{Magma} that the equation $(x+1)^e-x^e-1=0$ has 23 solutions in $\F_{3^m}$. So the code $\mc{C}_{(1,e)}$ does not have parameters $[3^m-1,3^m-1-2m,4]$. 

When $\delta=-1$ and $t=0$, we have $a=-1$. For $m=11, h=7$, we take $e=40964$. Then $\gcd(e-1,3^m-1)=23$. We check by Magma \cite{Magma} that the equation $(x+1)^e-x^e-1=0$ has 23 solutions in $\F_{3^m}$. So the code $\mc{C}_{(1,e)}$ does not have parameters $[3^m-1,3^m-1-2m,4]$.
\end{remark}

\begin{remark}
In \cite{YH19}, the authors showed that $C_{(1,e)}$ has parameters $[3^m-1,3^m-1-2m,4]$ if $m$ is odd and $e=\frac{3^m-3}{4}$. By setting $h=1, t=0, \delta=1$ (so $a=3$), we see that their result is a special case of Theorem \ref{thm_main1}.
\end{remark}

\begin{example}
Let $m =5$ and $(h, t, \delta)=(1, 1, -1)$. Then $a=-5$ and \eqref{eq_e1} has only one even solution $e=62$. It can be checked that $\gcd(e-\frac{\delta+3}{2}\cdot 3^t,3^m-1)=\frac{\delta+3}{2}$. Let $\alpha$ be a primitive element of $\F_{3^m}$ with $\alpha^5+2\alpha+1=0$. According to Theorem \ref{thm_main1}, $\mc{C}_{(1,e)}$ is a ternary cyclic code with parameters $[242, 232, 4]$. The generator polynomial of $\mc{C}_{(1,e)}$ is $x^{10}+2x^9+2x^8+x^7+2x^6+x^5+x^3+x^2+x+2$.
\end{example}

\begin{example}
Let $m =7$ and $(h, t, \delta)=(1, 1, 1)$. Then $a=7$ and \eqref{eq_e1} has only one even solution $e=1638$. It can be checked that $\gcd(e-\frac{\delta+3}{2}\cdot 3^t,3^m-1)=\frac{\delta+3}{2}$. Let $\alpha$ be a primitive element of $\F_{3^m}$ with $\alpha^7+2\alpha^2+1=0$. According to Theorem \ref{thm_main1}, $\mc{C}_{(1,e)}$ is a ternary cyclic code with parameters $[2186, 2172, 4]$. The generator polynomial of $\mc{C}_{(1,e)}$ is $x^{14}+2x^{13}+2x^{12}+2x^{10}+2x^8+x^7+2x^6+2x^4+x^3+2x^2+x+2$.
\end{example}

\subsection{The fourth family of optimal ternary cyclic codes with minimum distance four}

In this subsection, we consider the exponents $e$ as the form of
\begin{equation}\label{eq_e2}
e(3^h+1)\equiv \frac{3^m-a}{2}\pmod{3^m-1},
\end{equation}
where $1\leq h\leq m-1$ and $a\equiv 3\pmod{4}$. We give a necessary condition for the existence of such code $\mc{C}_{(1,e)}$ with parameters $[3^m-1,3^m-1-2m,4]$. Then we present a new family of such codes.

\begin{lemma}
Let $m,h$ and $a$ be integers with $1\leq h\leq m-1$ and $a\equiv 3\pmod{4}$. Then there exists an even integer $e$ satisfying
\begin{equation*}
e(3^h+1)\equiv\frac{3^m-a}{2}\pmod{3^m-1}
\end{equation*} 
 if and only if $m$ is odd.
\end{lemma}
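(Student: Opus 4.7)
The plan is to mimic closely the argument of Lemma \ref{lem_e1_exist}, adapting it from the $3^h-1$ case to the $3^h+1$ case. The key reduction is the observation that an even $e$ solves the congruence if and only if $k$ solves $2k(3^h+1)\equiv \frac{3^m-a}{2}\pmod{3^m-1}$, and this second congruence is solvable exactly when
\[
\gcd(2(3^h+1),\, 3^m-1) \;\Bigm|\; \frac{3^m-a}{2}.
\]
So the whole proof reduces to evaluating this gcd and checking the resulting divisibility, in each parity class of $m$.

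First I would dispose of the case $m$ even. In this case $3^m\equiv 1\pmod 4$, and since $a\equiv 3\pmod 4$ we have $3^m-a\equiv 2\pmod 4$, so $\frac{3^m-a}{2}$ is odd. On the other hand $2(3^h+1)$ is even, so $\gcd(2(3^h+1),3^m-1)$ is even, and the divisibility fails. Hence no even $e$ can exist.

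Next I would handle the case $m$ odd. Here $3^m-1\equiv 2\pmod 4$, so $(3^m-1)_2=2$. The main computation is
\[
\gcd(2(3^h+1),3^m-1)=2,
\]
which I would establish as follows. For the $2$-part, $(2(3^h+1))_2\ge 4>2=(3^m-1)_2$, so the $2$-part of the gcd is $2$. For the odd part, Lemma \ref{lem_gcd}(2) applies with $k=h$, $\ell=m$: since $m$ is odd we have $(m)_2=1\le (h)_2$, placing us in the "otherwise" branch, so $\gcd(3^h+1,3^m-1)=\gcd(2,4)=2$; hence $\gcd((3^h+1)_{2'},(3^m-1)_{2'})=1$, and the odd part of the gcd is $1$. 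Combining these yields the claim. Finally, since $m$ is odd and $a\equiv 3\pmod 4$ we have $3^m-a\equiv 0\pmod 4$, so $\frac{3^m-a}{2}$ is even and the required divisibility $2\mid\frac{3^m-a}{2}$ holds automatically. This produces an even solution $e$ and completes the argument.

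The only delicate step is the gcd computation in the odd case; the rest is parity bookkeeping of exactly the same flavor as Lemma \ref{lem_e1_exist}. I expect the invocation of Lemma \ref{lem_gcd}(2) with the correct $2$-adic comparison to be the main point to state carefully, since the $3^h+1$ case is the one where the two branches of that lemma have to be distinguished.
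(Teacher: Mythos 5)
Your proposal is correct and follows essentially the same route as the paper: the same reduction to the divisibility condition $\gcd(2(3^h+1),3^m-1)\mid\frac{3^m-a}{2}$, the same parity argument for $m$ even, and the same appeal to Lemma \ref{lem_gcd}(2) to get $\gcd(3^h+1,3^m-1)=2$ when $m$ is odd. Your explicit split into $2$-part and odd part just spells out a step the paper leaves implicit.
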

\begin{proof}
As we saw in the proof of Lemma \ref{lem_e1_exist}, an even integer $e$ satisfying \eqref{eq_e2} exists if and only if 
\[\gcd(2(3^h+1),3^m-1)\left\vert \frac{3^m-a}{2}\right..\]
If $m$ is even, then $3^m-a\equiv 2\pmod{4}$ which implies that $\frac{3^m-a}{2}$ is odd. However, it is clear that $\gcd(2(3^h+1),3^m-1)$ is even. Thus, $\gcd(2(3^h+1),3^m-1)\nmid\frac{3^m-a}{2}$, which establishes the nonexistence of $e$. 

Next, we assume that $m$ is odd. Then $3^m-a\equiv 0\pmod{4}$ which implies that $\left(\frac{3^m-a}{2}\right)_2\geq 2$. On the other hand, by Lemma \ref{lem_gcd}, we have $\gcd(3^h+1,3^m-1)=2$. Since $3^m-1\equiv 2\pmod{4}$, then $\gcd(2(3^h+1),3^m-1)=2\mid \frac{3^m-a}{2}$. This completes the proof.
\end{proof}

We will construct a family of optimal ternary codes $\mc{C}_{(1,e)}$, where $e$ satisfies \eqref{eq_e2} with 
\[a=-2\cdot 3^t+1,\] 
where $t$ is a nonnegative integer. It is clear that $a=-2\cdot 3^t+1\equiv 3\pmod{4}$.

The next two lemmas can be proved in the same way as Lemmas \ref{lem_econd1} and \ref{lem_gcdcond}. Therefore, we omit the proofs.

\begin{lemma}\label{lem_econd2}
Let $m>0$ be an odd integer and let $h$ be an integer with $0\leq h\leq m-1$. Let $a=-2\cdot 3^t+1$ where $t\geq 0$ is an integer. Suppose that $e$ is an even integer satisfying \eqref{eq_e2}. Then $e\notin C_1$ and $|C_e|=m$. 
\end{lemma}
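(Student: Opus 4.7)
The plan is to mimic the proof of Lemma \ref{lem_econd1} almost verbatim, since the only substantive changes are $3^h-1 \rightsquigarrow 3^h+1$ and $a = 2\cdot 3^t\delta+1 \rightsquigarrow a = -2\cdot 3^t+1$. The claim $e \notin C_1$ is immediate because $C_1 = \{3^i \bmod (3^m-1) : 0 \leq i \leq m-1\}$ consists entirely of odd residues while $e$ is even by hypothesis. The content of the lemma is therefore the statement $|C_e| = m$, which by Lemma \ref{lem_Cesize} reduces to verifying $\gcd(e, 3^m-1) = 2$.

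To compute $\gcd(e, 3^m-1)$, I first take the gcd of both sides of \eqref{eq_e2} with $3^m-1$ to obtain
\[\gcd\bigl(e(3^h+1),\,3^m-1\bigr) \;=\; \gcd\Bigl(\tfrac{3^m-a}{2},\,3^m-1\Bigr).\]
The next step is the arithmetic identity $\gcd(\tfrac{3^m-a}{2}, 3^m-1) = \gcd(3^m-a, 3^m-1)$, which holds whenever $(3^m-1)_2 = 2$ and $4 \mid (3^m-a)$. Both conditions are easy to verify: $m$ odd gives $3^m - 1 \equiv 2 \pmod 4$, while $3^m - a = (3^m-1) + 2\cdot 3^t \equiv 2 + 2 \equiv 0 \pmod 4$ (using that $3^t$ is odd). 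From here the chain is immediate: $\gcd(3^m-a, 3^m-1) = \gcd(a-1, 3^m-1) = \gcd(-2\cdot 3^t, 3^m-1) = 2$, with the final equality using $\gcd(3, 3^m-1) = 1$ together with $(3^m-1)_2 = 2$.

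Finally I deduce $\gcd(e, 3^m-1) = 2$ from $\gcd(e(3^h+1), 3^m-1) = 2$: on the one hand $\gcd(e, 3^m-1)$ divides $\gcd(e(3^h+1), 3^m-1) = 2$, and on the other hand $2$ divides $\gcd(e, 3^m-1)$ because $e$ is even and $3^m-1$ is even. Lemma \ref{lem_Cesize} then yields $|C_e| = m$. There is no genuine obstacle in this argument — it is a direct parallel of Lemma \ref{lem_econd1} — and the only bookkeeping item is the $4$-divisibility of $3^m-a$ for the new value $a = -2\cdot 3^t + 1$, which is the one-line modular check performed above.
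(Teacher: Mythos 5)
Your proof is correct and follows exactly the route the paper intends: the paper omits the proof of this lemma, stating only that it "can be proved in the same way as Lemma \ref{lem_econd1}," and your argument is precisely that adaptation, with the only new ingredient being the check that $3^m-a=(3^m-1)+2\cdot 3^t\equiv 0\pmod 4$ for $a=-2\cdot 3^t+1$ when $m$ is odd. All the gcd manipulations match the paper's proof of Lemma \ref{lem_econd1}, so there is nothing to flag.
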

%\begin{proof}
%Since $e$ is even, it is clear that $e\notin C_1$. Since $m$ is odd, then $3^m-1\equiv 2\pmod{4}$. Thus 
%\begin{align*}
%&\gcd(e(3^h+1),3^m-1)=\gcd\left(\frac{3^m-a}{2},3^m-1\right)\\
%=&\gcd(3^m-a,3^m-1)=\gcd(a-1,3^m-1)=\gcd(2\cdot 3^t,3^m-1)=2.
%\end{align*}
%This implies that $\gcd(e,3^m-1)=2$. By Lemma \ref{lem_Cesize}, we have $|C_e|=m$.
%\end{proof}
%
%Using a similar argument as in the proof of Lemma \ref{lem_gcdcond}, we have the following Lemma.
\begin{lemma}\label{lem_f2_gcd}
Let $m>0$ be an odd integer and let $h$ be an integer with $0\leq h\leq m-1$. Let $a=-2\cdot 3^t+1$ where $t\geq 0$ is an integer. Suppose that $e$ is an even integer satisfying \eqref{eq_e2}. Then $\gcd(e-3^t,3^m-1)=1$. 
\end{lemma}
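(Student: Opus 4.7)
My plan is to follow the template of Lemma \ref{lem_gcdcond}, with only minor adjustments needed for the $(3^h+1)$-version of the congruence. First, I would substitute $a=-2\cdot 3^t+1$ into \eqref{eq_e2} so that the defining congruence becomes $e(3^h+1)\equiv \frac{3^m-1}{2}+3^t\pmod{3^m-1}$, and then rearrange it to
\[ e-3^t \equiv \frac{3^m-1}{2}-e\cdot 3^h \pmod{3^m-1}. \]
This immediately gives
\[ \gcd(e-3^t,\,3^m-1)=\gcd\!\left(\tfrac{3^m-1}{2}-e\cdot 3^h,\;3^m-1\right). \]

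Second, I would use parity to decompose the right-hand gcd. Since $m$ is odd, $(3^m-1)_2=2$, so $3^m-1=2\cdot\frac{3^m-1}{2}$ with $\frac{3^m-1}{2}$ odd. Because $e$ is even, the quantity $\frac{3^m-1}{2}-e\cdot 3^h$ is odd, so it shares no factor of $2$ with $3^m-1$. Combining this with $\gcd(3,3^m-1)=1$, the gcd collapses to
\[ \gcd\!\left(\tfrac{3^m-1}{2}-e\cdot 3^h,\,\tfrac{3^m-1}{2}\right)=\gcd\!\left(e\cdot 3^h,\,\tfrac{3^m-1}{2}\right)=\gcd\!\left(e,\,\tfrac{3^m-1}{2}\right). \]

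Finally, Lemma \ref{lem_econd2} gives $\gcd(e,3^m-1)=2$, and since $\frac{3^m-1}{2}$ is odd, the only candidate common factor $2$ cannot appear, forcing $\gcd\!\left(e,\tfrac{3^m-1}{2}\right)=1$, which is exactly what is wanted. There is no real obstacle here, as the argument is essentially parity bookkeeping; the only step requiring a second look is verifying that the sign change from $+e\cdot 3^h$ in Lemma \ref{lem_gcdcond} to $-e\cdot 3^h$ here does not affect the decomposition, which it does not.
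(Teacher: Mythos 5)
Your proof is correct and follows exactly the route the paper intends: the authors omit the proof of this lemma, stating it is proved "in the same way as Lemma \ref{lem_gcdcond}", and your rearrangement $e-3^t\equiv\frac{3^m-1}{2}-e\cdot 3^h\pmod{3^m-1}$ followed by the parity reduction to $\gcd(e,\frac{3^m-1}{2})=1$ is precisely that argument adapted to the $(3^h+1)$-congruence. The only cosmetic point is that $\gcd(e,3^m-1)=2$ is established in the (omitted) proof of Lemma \ref{lem_econd2} rather than in its statement, but it follows in one line just as in the proof of Lemma \ref{lem_econd1}.
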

%\begin{proof}
%This lemma can be proved in a manner similar to Lemma \ref{lem_gcdcond}. So we omit the proof.
%\end{proof}

We are ready to present our second new family of optimal ternary cyclic codes.

\begin{theorem}\label{thm_main2}
Let $m>0$ be an odd integer and let $h$ be an integer with $0\leq h\leq m-1$. Let $a=-2\cdot 3^t+1$ where $t\geq 0$ is an integer. Let $e$ be an even integer satisfying \eqref{eq_e2} and $1<e<3^m-1$. Then the ternary cyclic code $\mc{C}_{(1,e)}$ has parameters $[3^m-1,3^m-1-2m,4]$. 
\end{theorem}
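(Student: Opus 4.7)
The plan is to verify conditions C2 and C3 of Lemma \ref{lem_dingconds}, since Lemma \ref{lem_econd2} already supplies $e\notin C_1$ and $|C_e|=m$. After ruling out $\theta\in\F_3$ directly, the task reduces to showing that no $\theta\in\F_{3^m}\setminus\F_3$ satisfies $(\theta+1)^e=\epsilon(\theta^e+1)$ for any $\epsilon\in\{\pm 1\}$.

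The pivotal first move I would make is to raise both sides to the $(3^h+1)$-th power. Because $3^h+1$ is even, $\epsilon^{3^h+1}=1$ regardless of the sign, so the two subcases $\epsilon=\pm 1$ collapse into one. Using \eqref{eq_e2} the left-hand side becomes $\eta(\theta+1)(\theta+1)^{3^t}$, while expanding the right-hand side via Frobenius and substituting $\theta^{e(3^h+1)}=\eta(\theta)\theta^{3^t}$ produces the master identity
\[
\eta(\theta+1)(\theta+1)^{3^t}-\eta(\theta)\theta^{3^t}-1=\theta^e+\theta^{e\cdot 3^h}.
\]
Writing $v=\theta^e$ and using $(\theta+1)^{3^t}=\theta^{3^t}+1$, I intend to do a four-way case analysis on $(\eta(\theta+1),\eta(\theta))\in\{\pm 1\}^2$.

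The case $(1,1)$ reduces to $v^{3^h}+v=0$, so either $v=0$ (giving $\theta=0$) or $\theta^{e(3^h-1)}=-1$, which contradicts $\theta$ being a square (as $m$ odd makes $-1$ a non-square). The case $(-1,-1)$ reduces to $v^{3^h}+v=1$; applying Frobenius again gives $v^{3^{2h}}=v$, placing $v\in\F_{3^{\gcd(2h,m)}}=\F_{3^{\gcd(h,m)}}$ (using $m$ odd), whereupon $v^{3^h}=v$ in that subfield collapses the relation to $v=-1$ and, via $v^{3^h+1}=-\theta^{3^t}$, to $\theta=-1\in\F_3$. The mixed cases $(1,-1)$ and $(-1,1)$ reduce respectively to $v^{3^h}(v-1)=v$ (which, after the same subfield collapse, becomes $v(v+1)=0$) and to $(v-1)(v^{3^h}-1)=0$ (factoring directly, with Frobenius bijectivity forcing $v=1$). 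In every surviving subcase $v=\pm 1$, so $\theta^{2e}=1$, and the gcd identities $\gcd(e,3^m-1)=\gcd(2e,3^m-1)=2$ inherited from the proof of Lemma \ref{lem_econd1} push $\theta$ into $\F_3$.

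The main obstacle I foresee lies in the diagonal cases, where the single equation in $v$ does not yield a contradiction on its own. The decisive idea is to iterate Frobenius so that $v^{3^{2h}}=v$, thereby trapping $v$ inside $\F_{3^{\gcd(h,m)}}$; the oddness of $m$ is essential here because it forces $\gcd(2h,m)=\gcd(h,m)$, after which $v^{3^h}=v$ in the subfield makes the previously underdetermined relation collapse to $v=\pm 1$ and hence to $\theta\in\F_3$. Note that no extra $\gcd$-hypothesis on $e$ (as in Theorem \ref{thm_main1}) is needed, because the $(3^h+1)$-power trick already absorbs the sign $\epsilon$, and the subfield collapse handles both ``non-trivial'' diagonal cases uniformly.
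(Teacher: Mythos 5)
Your argument is correct, and its skeleton coincides with the paper's: raise $(\theta+1)^e=\pm(\theta^e+1)$ to the $(3^h+1)$-th power (killing the sign), rewrite $e(3^h+1)\equiv\frac{3^m-1}{2}+3^t$ via the quadratic character $\eta$, and split into four cases according to $(\eta(\theta+1),\eta(\theta))$. Where you genuinely diverge is in how the two ``hard'' cases are killed. The paper turns the case $\eta(\theta+1)=\eta(\theta)=-1$ into the quadratic $y^2-y-\theta^{3^t}=0$ and the case $\eta(\theta+1)=1,\eta(\theta)=-1$ into $y^2+\theta^{3^t}y-\theta^{3^t}=0$ (with $y=\theta^e$), and concludes by checking that the discriminants $(1+\theta)^{3^t}$ and $\theta^{3^t}(\theta+1)^{3^t}$ are nonsquares; you instead iterate the relation between $v=\theta^e$ and $v^{3^h}$ to get $v^{3^{2h}}=v$, trap $v$ in $\F_{3^{\gcd(2h,m)}}=\F_{3^{\gcd(h,m)}}$ (this is where the oddness of $m$ enters for you, versus entering through $\eta(-1)=-1$ for the paper), and collapse the relation to $v=\pm1$, whence $\theta\in\F_3$ since $\gcd(2e,3^m-1)=2$. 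A further difference is in the last case $\eta(\theta+1)=-1,\eta(\theta)=1$: your factorization $(v-1)(v^{3^h}-1)=0$ is resolved by bijectivity of Frobenius and $\gcd(e,3^m-1)=2$ alone, so you never need Lemma \ref{lem_f2_gcd} ($\gcd(e-3^t,3^m-1)=1$), which the paper invokes there. Your discriminant-free route is slightly more self-contained; the paper's discriminant computation is shorter per case and parallels the technique it reuses elsewhere (e.g.\ in Theorem \ref{thm_main1}). Both are valid proofs of conditions C2 and C3 of Lemma \ref{lem_dingconds}.
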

\begin{proof}
By Lemmas \ref{lem_dingconds} and \ref{lem_econd2}, we only need to show that the conditions C2 and C3 in Lemma \ref{lem_dingconds} hold. As we have explained in the proof of Theorem \ref{thm_main1}, we only need to show that
\begin{equation}\label{eq_econd2}
(x+1)^e=\pm (x^e+1)
\end{equation}
has no solutions in $\F_{3^m}\setminus\F_3$. Suppose to the contrary that $\theta\in\F_{3^m}\setminus\F_3$ is a solution to \eqref{eq_econd2}. Then 
\[(\theta+1)^{e(3^h+1)}=(\theta^e+1)^{3^h+1},\]
which can be written as
\begin{equation}\label{eq_f2cond}
\eta(\theta+1)(\theta^{3^t}+1)=\eta(\theta)\theta^{3^t}+\eta(\theta)\theta^{3^t-e}+\theta^e+1,
\end{equation}
where $\eta$ is the quadratic character of $\F_{3^m}$. We derive a contradiction in the following four cases.

{\bf Case 1}. $\eta(\theta+1)=\eta(\theta)=1$. 

Then \eqref{eq_f2cond} is reduced to $\theta^{3^t-e}+\theta^e=0$ which is equivalent to $\theta^{2e-3^t}=-1$. This cannot happen because $-1$ is a nonsquare in $\F_{3^m}$ but $\theta$ is a square.

{\bf Case 2}. $\eta(\theta+1)=\eta(\theta)=-1$.

Then \eqref{eq_f2cond} is reduced to $\theta^{3^t-e}-\theta^e+1=0$, which can be written as 
\[y^2-y-\theta^{3^t}=0,\]
where $y=\theta^e$. Note that $\Delta_y=1+\theta^{3^t}=(1+\theta)^{3^t}$ is a nonsquare. Then there are no solutions for $y$.

{\bf Case 3}. $\eta(\theta+1)=1$ and $\eta(\theta)=-1$. 

Then \eqref{eq_f2cond} is reduced to $\theta^{3^t}-\theta^{3^t-e}+\theta^e=0$ which can be written as 
\[y^2+\theta^{3^t}y-\theta^{3^t}=0,\]
where $y=\theta^e$. Note that $\Delta_y=\theta^{2\cdot 3^t}+\theta^{3^t}=\theta^{3^t}(\theta+1)^{3^t}$ which is a nonsquare. Then there are no solutions for $y$.

{\bf Case 4}. $\eta(\theta+1)=-1$ and $\eta(\theta)=1$. 

Then \eqref{eq_f2cond} is reduced to $\theta^{3^t}-\theta^{3^t-e}-\theta^e+1=(\theta^e-1)(\theta^{3^t-e}-1)=0$. If $\theta^e=1$, then $\theta=\pm 1$ since $\gcd(e,3^m-1)=2$. If $\theta^{3^t-e}=1$, then $\theta=1$ since $\gcd(e-3^t,3^m-1)=1$ by Lemma \ref{lem_f2_gcd}. Both are contradictions to the assumption that $\theta\notin\F_3$. 

In the four cases discussed above, we have demonstrated that $(x+1)^e=\pm(x^e+1)$ has no solutions in $\F_{3^m}\setminus\F_3$, and therefore conditions C2 and C3 of Lemma \ref{lem_dingconds} are satisfied. This completes the proof.
\end{proof}

\begin{remark}
Theorem 3.3 of \cite{ZH20} is a special case of Theorem \ref{thm_main2}, i.e., when $t=0$. We should mention that the proof idea of Theorem \ref{thm_main2} follows from \cite{ZH20}. 
\end{remark}

\begin{remark}
In \cite{DH13}, the authors showed that $\mc{C}_{(1,e)}$ has parameters $[3^m-1,3^m-1-2m,4]$ if $m\equiv 1\pmod{4}$ and $e\in\left\{\frac{3^{(m+1)/2}-1}{2}+\frac{3^m-1}{2}, \frac{3^{m+1}-1}{8}+\frac{3^m-1}{2}\right\}$, or $m\equiv 3\pmod{4}$ and $e\in\left\{\frac{3^{(m+1)/2}-1}{2}, \frac{3^{m+1}-1}{8}\right\}$, or $m\geq 3$ is odd and $e=\frac{3^m+1}{4}+\frac{3^m-1}{2}$. All these results are special cases of Theorem \ref{thm_main2}. This can be seen by noting that $\frac{3^{(m+1)/2}-1}{2}(3^{(m+1)/2}+1)\equiv \frac{3^m-1}{8}(3^1+1)\equiv \frac{3^m+1}{4}(3^0+1)\equiv \frac{3^m+1}{2}\pmod{3^m-1}$.
\end{remark}

\begin{example}
Let $m =7$ and $(h, t)=(1, 1)$. Then $a=-5$ and \eqref{eq_e1} has only one even solution $e=274$. Let $\alpha$ be a primitive element of $\F_{3^m}$ with $\alpha^7+2\alpha^2+1=0$. According to Theorem \ref{thm_main2}, $\mc{C}_{(1,e)}$ is a ternary cyclic code with parameters $[2186, 2172, 4]$. The generator polynomial of $\mc{C}_{(1,e)}$ is  $x^{14}+x^{13}+x^{10}+2x^9+x^8+x^6+2x^5+2x^3+x^2+2x+2$. 
\end{example}

\section{Further results on cyclic codes $\mc{C}_{(1,e)}$ with $e(3^h\pm 1)\equiv\frac{3^m-a}{2}\pmod{3^m-1}$}\label{sec_nonexist}

In Section \ref{sec_newcodes}, we construct new families of optimal cyclic codes $\mc{C}_{(1,e)}$ where $e$ satisfies $e(3^h\pm 1)\equiv\frac{3^m-a}{2}\pmod{3^m-1}$ for some integers $a$ with $a\equiv 3\pmod{4}$. In this section, we consider the cyclic codes $\mc{C}_{(1,e)}$ where $e$ satisfies $e(3^h\pm 1)\equiv\frac{3^m-a}{2}\pmod{3^m-1}$ for some integers $a$ with $a\equiv 1\pmod{4}$. Some nonexistence results on cyclic codes $\mc{C}_{(1,e)}$ with parameters $[3^m-1,3^m-1-2m,4]$ are presented.

\subsection{Cyclic codes $\mc{C}_{(1,e)}$ with $e(3^h\pm 1)\equiv\frac{3^m-a}{2}\pmod{3^m-1}$ where $a\equiv 5\pmod{8}$}

In this subsection, we show that a cyclic code $\mc{C}_{(1,e)}$ with $e(3^h\pm 1)\equiv\frac{3^m-a}{2}\pmod{3^m-1}$ where $a\equiv 5\pmod{8}$, cannot have parameters $[3^m-1,3^m-1-2m,4]$.

\begin{theorem}\label{thm_pma5}
Let $m, h$ and $e$ be integers with $0\leq h\leq m-1$ and $1<e<3^m-1$. Let $a$ be an integer with $a\equiv 5\pmod{8}$. Suppose that $e\notin C_1$ and 
\begin{equation}\label{eq_pma}
e(3^h\pm1)\equiv \frac{3^m-a}{2}\pmod{3^m-1}, 
\end{equation}
then the ternary cyclic code $\mc{C}_{(1,e)}$ does not have parameters $[3^m-1,3^m-1-2m,4]$.
\end{theorem}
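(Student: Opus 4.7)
The strategy is to analyze the $2$-adic valuations of both sides of the congruence in \eqref{eq_pma} in order to force $e$ to be odd (or to show that no $e$ exists at all), and then to invoke Lemma \ref{lem_dingconds} together with the dimension formula $\dim \mc{C}_{(1,e)} = 3^m-1-m-|C_e|$. Write $v_2(\cdot)$ for the $2$-adic valuation. Using $a \equiv 5 \pmod{8}$ and examining $3^m \pmod{8}$, I would first establish that $v_2((3^m-a)/2) = 0$ when $m$ is odd and $v_2((3^m-a)/2) = 1$ when $m$ is even. A standard lifting-the-exponent computation gives $v_2(3^m-1) = 1$ for $m$ odd and $v_2(3^m-1) = v_2(m)+2 \geq 3$ for $m$ even, so in both cases $v_2(3^m-1) > v_2((3^m-a)/2)$. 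Rewriting \eqref{eq_pma} as an equality $e(3^h\pm 1) = (3^m-a)/2 + k(3^m-1)$, this gap forces
\[ v_2(e) + v_2(3^h \pm 1) \;=\; v_2((3^m-a)/2). \]

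The next step is a case analysis on the parity of $m$. For $m$ odd, the right-hand side is $0$, so $e(3^h\pm 1)$ must be odd; but $3^h\pm 1$ is even for every $h\geq 1$, and the boundary values at $h=0$ can be ruled out directly (with the minus sign, $3^0-1 = 0$ would require $(3^m-1) \mid (3^m-a)/2$, which fails since the two sides have different $2$-adic valuations and $a=3^m$ is incompatible with $a\equiv 5\pmod 8$; with the plus sign, $3^0+1 = 2$ gives $2e$ on the left, again contradicting oddness). Hence no $e$ satisfies \eqref{eq_pma} for $m$ odd, and the theorem holds vacuously. For $m$ even, one needs $v_2(e) + v_2(3^h\pm 1) = 1$, and since $v_2(3^h\pm 1)\geq 1$, this forces $v_2(3^h\pm 1) = 1$ (which further restricts the admissible sign/parity pairs for $h$) and $v_2(e) = 0$, i.e., $e$ is odd.

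With $e$ odd in hand, I would conclude as follows: if $|C_e|\neq m$, then $\dim \mc{C}_{(1,e)}\neq 3^m-1-2m$ and the code cannot have the stated parameters; if $|C_e|=m$, then Lemma \ref{lem_dingconds} applies and condition {\rm C1} fails (indeed one also checks directly that $x=1$ is not a root of $(x+1)^e+x^e+1$ when $e$ is odd, since $2^e+2\equiv 1\pmod 3$), so $\mc{C}_{(1,e)}$ again fails to have minimum distance $4$. The main obstacle I anticipate is purely organizational, namely the careful bookkeeping across the two sign choices, the four parities of $(m,h)$, and the two boundary cases at $h=0$; each sub-case reduces to a single $2$-adic valuation comparison, and I do not expect any deeper arithmetic obstruction beyond routine lifting-the-exponent arguments.
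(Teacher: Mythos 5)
Your proposal is correct and follows essentially the same route as the paper: both arguments reduce to the observation that the $2$-adic valuation of $\frac{3^m-a}{2}$ is too small (namely $0$ for $m$ odd and $1$ for $m$ even, while that of $3^m-1$ is $1$, respectively at least $3$) for an even $e$ to satisfy \eqref{eq_pma}, after which condition C1 of Lemma \ref{lem_dingconds} together with the dimension formula rules out the parameters $[3^m-1,3^m-1-2m,4]$. The only cosmetic difference is that the paper packages the obstruction as the divisibility criterion $\gcd(2(3^h\pm1),3^m-1)\mid\frac{3^m-a}{2}$ borrowed from the proof of Lemma \ref{lem_e1_exist}, whereas you work directly with the lifted equation $e(3^h\pm1)=\frac{3^m-a}{2}+k(3^m-1)$ and compare valuations; your explicit handling of the $h=0$ boundary is in fact slightly more careful than the paper's.
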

\begin{proof}
It is enough to show that there are no even integers $e$ satisfying \eqref{eq_pma}. We have established in the proof of Lemma \ref{lem_e1_exist} that an even integer $e$ satisfying \eqref{eq_pma} exists if and only if 
\[\gcd(2(3^h\pm1),3^m-1)\left\vert \frac{3^m-a}{2}\right..\]
If $m$ is odd, then $3^m-a\equiv 2\pmod{4}$, which implies that $\frac{3^m-a}{2}$ is odd. However,  $\gcd(2(3^h\pm1),3^m-1)$ is even. Hence, $\gcd(2(3^h\pm 1),3^m-1)$ does not divide $\frac{3^m-a}{2}$. If $m$ is even, then $3^m-a\equiv 4\pmod{8}$, which implies that $(3^m-a)_2=4$. Note that $3^m-1\equiv 0\pmod{8}$ and $3^h\pm 1$ is even. So $\gcd(2(3^h\pm 1),3^m-1)_2\geq 4$. Therefore, $\gcd(2(3^h\pm1),3^m-1)$ does not divide $\frac{3^m-a}{2}$.  This completes the proof. 
\end{proof}

\begin{remark}\label{rmk_eeven}
The first part of the above proof is valid for all integers $a$ with $a\equiv 1\pmod{4}$. That is, there are no even integers $e$ satisfying $e(3^h\pm1)\equiv \frac{3^m-a}{2}\pmod{3^m-1}$ with $a\equiv 1\pmod{4}$ if $m$ is odd.
\end{remark}

\subsection{Cyclic codes $\mc{C}_{(1,e)}$ with $e(3^h+1)\equiv\frac{3^m-a}{2}\pmod{3^m-1}$ where $a\equiv 1\pmod{8}$}

The situation when $a\equiv 1\pmod{8}$ is much more complicated than when $a\not\equiv 1\pmod{8}$. We focus on the case when $a=1$ and show that the codes $\mc{C}_{(1,e)}$ cannot have parameters $[3^m-1,3^m-1-2m,4]$. For the ease of notation, we define $(0)_2=\infty$ in this subsection.

\begin{lemma}\label{lem_p1m1e}
Let $m$ and $h$ be integers with $0\leq h\leq m-1$. Then there exists an even integer $e$ such that 
\begin{equation}\label{eq_b}
e(3^h+1)\equiv \frac{3^m-1}{2}\pmod{3^m-1}
\end{equation}
if and only if $m\equiv 0\pmod{4}$, or $m\equiv 2\pmod{4}$ and $(h)_2\geq (m)_2$.
\end{lemma}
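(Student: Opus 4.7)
The plan is to follow the same reduction used in Lemma \ref{lem_e1_exist}: an even integer $e$ satisfying \eqref{eq_b} exists if and only if $\gcd(2(3^h+1),3^m-1)$ divides $\frac{3^m-1}{2}$, so the whole question collapses to a single divisibility. Writing $3^h+1=2^uv$ and $3^m-1=2^sw$ with $v,w$ odd, I get
\[\gcd(2(3^h+1),3^m-1)=2^{\min(u+1,s)}\gcd(v,w),\qquad \tfrac{3^m-1}{2}=2^{s-1}w,\]
and since $\gcd(v,w)\mid w$ automatically, the divisibility reduces to the purely $2$-adic inequality $u+2\leq s$. This is the only inequality I will need to check.

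Next I would compute $u=v_2(3^h+1)$ and $s=v_2(3^m-1)$. For $u$, the identity $3^2\equiv 1\pmod 8$ gives by a direct reduction modulo $8$ that $u=1$ whenever $h$ is even (noting that $h=0$ also yields $3^0+1=2$) and $u=2$ whenever $h$ is odd. For $s$, I would invoke the $p=2$ Lifting-the-Exponent formula (or argue by elementary induction using $3^{2k}-1=(3^k-1)(3^k+1)$) to obtain $s=1$ if $m$ is odd and $s=2+v_2(m)$ if $m$ is even.

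Finally, I would plug these in to read off the criterion. When $m$ is odd, $s=1$ kills $u+2\leq s$ for every $h$, so no even $e$ exists. When $m$ is even with $h$ even, the inequality becomes $3\leq 2+v_2(m)$, which is automatic since $v_2(m)\geq 1$; with $h$ odd, it becomes $4\leq 2+v_2(m)$, equivalent to $4\mid m$. Hence existence holds iff $4\mid m$ (any $h$), or $m\equiv 2\pmod 4$ and $h$ is even; under the convention $(0)_2=\infty$ used in the subsection, the latter condition reads exactly $(h)_2\geq 2=(m)_2$, matching the statement. The only mildly technical ingredient is the computation of $v_2(3^m-1)$, but it is a routine LTE application, so I do not anticipate a genuine obstacle.
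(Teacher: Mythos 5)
Your proof is correct and follows essentially the same route as the paper's: both reduce existence of an even $e$ to the divisibility $\gcd(2(3^h+1),3^m-1)\mid \frac{3^m-1}{2}$ and then settle it by a $2$-adic analysis. The only difference is one of execution: you compute the exact valuations $v_2(3^h+1)$ and $v_2(3^m-1)=2+v_2(m)$ (via LTE) and collapse everything to the single inequality $u+2\leq s$, whereas the paper argues case by case ($m\equiv 0$ vs.\ $2\pmod 4$), bounding the relevant $2$-parts and invoking Lemma \ref{lem_gcd} for the subcase $(h)_2\geq (m)_2$; your version is slightly more uniform but reaches the identical criterion, including the correct handling of $h=0$ under the convention $(0)_2=\infty$.
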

\begin{proof}
We have seen in the proof of Lemma \ref{lem_e1_exist} that an even integer $e$ satisfying \eqref{eq_b} exists if and only if 
\[\gcd(2(3^h+1),3^m-1)\left\vert \frac{3^m-1}{2}\right..\]
%If $m$ is odd, then $3^m-1\equiv 2\pmod{4}$ which implies that $\frac{3^m-1}{2}$ is odd. But $\gcd(2(3^h+1),3^m-1)$ is even. Therefore $\gcd(2(3^h+1),3^m-1)$ does not divide $\frac{3^m-1}{2}$.
%
By Remark \ref{rmk_eeven}, we only need to consider the case when $m$ is even. 
If $m\equiv 0\pmod{4}$, then $3^m-1\equiv 0\pmod{16}$ which implies that $\left(\frac{3^m-1}{2}\right)_2\geq 8$. Let $d=\gcd(2(3^h+1),3^m-1)$. It is clear that $(d)_{2'}\mid \left(\frac{3^m-1}{2}\right)_{2'}$. Thus in order to show that $d\mid \frac{3^m-1}{2}$, it suffices to prove that $(d)_2\leq 8$. Note that 
\[
\begin{array}{l}
\vspace{0.2cm} 3^h+1\equiv
\left\{\begin{array}{ll}
\vspace{0.1cm} 4\pmod{8}, &\text{if }h\ \text{is odd },\\
2\pmod{4},&\text{otherwise}.
\end{array}
\right .
\end{array}
\]
So $(3^h+1)_2\leq 4$ and therefore $(d)_2\leq 8$. 

Finally, we consider the case when $m\equiv 2\pmod{4}$. In this case, we have $3^m-1\equiv 8\pmod{16}$. If $(h)_2\geq (m)_2$, by Lemma \ref{lem_gcd}, we have $\gcd(3^h+1,3^m-1)=2$. Thus $\gcd(2(3^h+1),3^m-1)=4$ which divides $\frac{3^m-1}{2}$. If $(h)_2<(m)_2$, then $h$ is odd since $m\equiv 2\pmod{4}$. Thus, $3^h+1\equiv 0\pmod{4}$, which implies that 
$$(\gcd(2(3^h+1),3^m-1))_2\geq 8.$$
However, since $\left(\frac{3^m-1}{2}\right)_2=4$, we see that $\gcd(2(3^h+1),3^m-1)\nmid \frac{3^m-1}{2}$. This completes the proof.
\end{proof}

\begin{lemma}\label{lem_s5_2}
Let $m$ be a positive integer with $m\equiv 0 \pmod{4}$, and let $h$ be an integer with $0\leq h\leq m-1$ and $h\neq \frac{m}{2}$. Suppose that $e$ is an even integer satisfying \eqref{eq_b}. Then $|C_e|\neq m$.
\end{lemma}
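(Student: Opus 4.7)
My plan is to exhibit, for every $h$ permitted by the hypothesis, a positive integer $s<m$ with $e(3^s-1)\equiv 0\pmod{3^m-1}$. Since $|C_e|$ is the smallest positive $t$ with $e\cdot 3^t\equiv e\pmod{3^m-1}$, i.e.\ with $e(3^t-1)\equiv 0\pmod{3^m-1}$, producing such an $s$ forces $|C_e|\leq s<m$ and hence $|C_e|\neq m$. The argument naturally splits into the cases $h\geq 1$ and $h=0$.

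For the main case $h\geq 1$, I would start from the hypothesis, writing $2e(3^h+1)=(2k+1)(3^m-1)$ for some integer $k$. The key move is to multiply both sides by $3^h-1$, yielding $2e(3^{2h}-1)=(2k+1)(3^h-1)(3^m-1)$. Since $h\geq 1$, the factor $3^h-1$ is even, and this parity cancels the leading $2$ on the left, giving the clean identity $e(3^{2h}-1)\equiv 0\pmod{3^m-1}$. Using $3^m\equiv 1\pmod{3^m-1}$, I then set $s=2h\bmod m\in\{0,1,\ldots,m-1\}$, which gives $e(3^s-1)\equiv 0\pmod{3^m-1}$ as well. The possibility $s=0$ requires $m\mid 2h$, which under $1\leq h\leq m-1$ forces $h=m/2$; this is precisely the value excluded by the hypothesis $h\neq m/2$. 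Hence $0<s<m$, and the conclusion follows.

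The edge case $h=0$ requires only a brief separate argument: the hypothesis becomes $2e\equiv\frac{3^m-1}{2}\pmod{3^m-1}$, so $4e\equiv 0\pmod{3^m-1}$, and therefore $e(3^2-1)=8e=2\cdot 4e\equiv 0\pmod{3^m-1}$. This yields $|C_e|\leq 2$, and since $m\equiv 0\pmod 4$ ensures $m\geq 4$, we have $|C_e|\neq m$.

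The subtle point—and the place where the hypothesis is truly used—is the single excluded value $h=m/2$: for that $h$, multiplication by $3^h-1$ collapses into the trivial identity $e(3^m-1)\equiv 0\pmod{3^m-1}$ and supplies no information about $|C_e|$. So the main obstacle is really just showing that the natural reduction $s=2h\bmod m$ is nonzero for exactly the $h$'s that the hypothesis allows; this is the divisibility check above, and it also shows that the condition $h\neq m/2$ is sharp for the argument.
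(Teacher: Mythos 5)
Your proof is correct, and it takes a genuinely different route from the paper's. The paper works through $\gcd$'s: from \eqref{eq_b} it gets $\gcd(e(3^h+1),3^m-1)=\frac{3^m-1}{2}$, invokes Lemma \ref{lem_gcd} to evaluate $\gcd(3^h+1,3^m-1)$ according to whether $(h)_2<(m)_2$ or not, deduces that $\frac{3^m-1}{2(3^{\gcd(h,m)}+1)}$ (resp.\ $\frac{3^m-1}{4}$) divides $e$, and concludes $|C_e|\leq 2\gcd(h,m)<m$ (resp.\ $|C_e|\leq 2$), using $h\neq \frac{m}{2}$ to ensure $2\gcd(h,m)<m$. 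You instead multiply the defining relation $2e(3^h+1)=(2k+1)(3^m-1)$ by $3^h-1$, cancel the factor $2$ against the even number $3^h-1$ (valid for $h\geq 1$), and read off $e(3^{2h}-1)\equiv 0\pmod{3^m-1}$ directly, with $h\neq\frac{m}{2}$ entering precisely to make $2h\bmod m$ nonzero; the case $h=0$ is handled by the separate observation $4e\equiv 0\pmod{3^m-1}$. Both arguments are sound and both isolate $h=\frac{m}{2}$ as the truly excluded value. Your version is shorter, avoids Lemma \ref{lem_gcd} and the $(h)_2$-versus-$(m)_2$ case split entirely, and uses $m\equiv 0\pmod 4$ only to guarantee $m>2$ in the $h=0$ case, so it is in fact slightly more general. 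The paper's version buys more structural information --- explicit large divisors of $3^m-1$ dividing $e$ and the sharper bound $|C_e|\leq 2\gcd(h,m)$ --- which matches the style of the companion Lemma \ref{lem_s5_4}, where the exact value $|C_e|=2$ is needed; your divisibility $|C_e|\mid(2h\bmod m)$ could be combined with $|C_e|\mid m$ to recover essentially the same bound, but you rightly do not need it for the stated conclusion.
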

\begin{proof}
By \eqref{eq_b}, we have $\gcd(e(3^h+1),3^m-1)=\gcd(\frac{3^m-1}{2},3^m-1)=\frac{3^m-1}{2}$. By Lemma \ref{lem_gcd}, we have
\[\gcd(3^h+1,3^m-1)=\begin{cases}
3^{\gcd(h,m)}+1,&\text{if }(h)_2<(m)_2,\\
2,&\text{if }(h)_2\geq (m)_2.
\end{cases}\]
Then it follows that
\[\begin{cases}
\vspace{0.2cm}\frac{3^m-1}{2(3^{\gcd(h,m)}+1)}\mid e,&\text{if }(h)_2<(m)_2,\\
\frac{3^m-1}{4} \mid e,&\text{if }(h)_2\geq (m)_2.\end{cases}\]

If $(h)_2\geq (m)_2$, then $(3^m-1)\mid 4e$ which implies that $9e\equiv e\pmod{3^m-1}$. Thus $|C_e|\leq 2$. 

If $(h)_2<(m)_2$, then $2\cdot\gcd(h,m)<m$ since $h\neq \frac{m}{2}$. Note that \[(3^m-1)\mid 2e(3^{\gcd(h,m)}+1)\mid e(3^{2\gcd(h,m)}-1).\] 
We conclude that $|C_e|\leq 2\cdot\gcd(h,m)<m$. This completes the proof.
\end{proof}

\begin{lemma}\label{lem_s5_3}
Let $m$ be a positive integer with $m\equiv 0 \pmod{4}$, and let $h=\frac{m}{2}$. Suppose that $e$ is an even integer satisfying \eqref{eq_b}. Then $(x+1)^e-x^e-1=0$ has solutions in $\F_{3^m}\setminus\F_3$.
\end{lemma}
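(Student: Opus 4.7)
The plan is to exploit the fact that when $h=m/2$, the exponent $e$ reduces to $(3^{m/2}-1)/2$ modulo $3^{m/2}-1$, so that raising any element of $\F_{3^{m/2}}^*$ to the $e$-th power coincides with the quadratic character $\eta$ of $\F_{3^{m/2}}$. The target equation $(x+1)^e-x^e-1=0$ then collapses, on $\F_{3^{m/2}}^*$, to a condition on $\eta$ that can be solved by a standard character-sum count.

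First I would pin down $e\bmod (3^{m/2}-1)$. Since $m\equiv 0\pmod 4$, the $2$-adic valuations satisfy $(h)_2=(m/2)_2<(m)_2$, so Lemma \ref{lem_gcd} gives $\gcd(3^{m/2}+1,3^m-1)=3^{m/2}+1$. Reducing \eqref{eq_b} modulo $3^{m/2}-1$ (using $3^{m/2}\equiv 1$ and $(3^m-1)/2=(3^{m/2}-1)(3^{m/2}+1)/2\equiv 0$) yields $2e\equiv 0\pmod{3^{m/2}-1}$, hence $e=k(3^{m/2}-1)/2$ for some integer $k$; substituting this back into \eqref{eq_b} forces $k$ to be odd, and therefore $e\equiv (3^{m/2}-1)/2\pmod{3^{m/2}-1}$. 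Since $m/2$ is even, $3^{m/2}\equiv 1\pmod 4$ and $\eta(-1)=1$. It follows that $x^e=x^{(3^{m/2}-1)/2}=\eta(x)\in\{\pm 1\}$ for every $x\in\F_{3^{m/2}}^*$, so for $x\in\F_{3^{m/2}}\setminus\{0,-1\}$ the equation $(x+1)^e-x^e-1=0$ becomes $\eta(x+1)=\eta(x)+1$, which is solvable precisely when $\eta(x)=1$ and $\eta(x+1)=-1$.

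To conclude, I would count
\[
N=\#\{x\in\F_{3^{m/2}}: x\neq 0,\ x+1\neq 0,\ \eta(x)=1,\ \eta(x+1)=-1\}
\]
by expanding $\tfrac{1}{4}\sum(1+\eta(x))(1-\eta(x+1))$. The standard identities $\sum_{y\in\F_q}\eta(y)=0$ and $\sum_{x\in\F_q}\eta(x^2+x)=-\eta(1)=-1$ (the discriminant $1$ being nonzero) reduce the computation to $N=(3^{m/2}-\eta(-1))/4=(3^{m/2}-1)/4$, which is at least $2$ for all $m\geq 4$. Finally, because $\eta(-1)=1$, both nonzero elements of $\F_3$ are squares in $\F_{3^{m/2}}$ and are therefore excluded from $N$; hence every counted $x$ automatically lies in $\F_{3^{m/2}}\setminus\F_3\subseteq\F_{3^m}\setminus\F_3$, producing the desired nontrivial solutions. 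The only mildly delicate part of the argument is Step 1, the reduction of $e$ modulo $3^{m/2}-1$; once that is in hand, the remaining character-sum count is routine.
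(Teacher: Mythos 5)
Your proof is correct and follows essentially the same route as the paper: both reduce $e$ modulo $3^{m/2}-1$ to $(3^{m/2}-1)/2$, restrict the equation to the subfield $\F_{3^{m/2}}$ where $x^e$ becomes the quadratic character, and reduce the problem to finding a nonzero square $\theta\notin\F_3$ with $\theta+1$ a nonsquare. The only difference is in the final existence step, and it is cosmetic: the paper produces such a $\theta$ by counting points on the conic $u^2-v^2=1$, whereas you count the relevant set directly via the standard character sum $\sum_x\eta(x^2+x)=-1$, which in addition yields the exact count $(3^{m/2}-1)/4$.
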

\begin{proof}
Since $h=\frac{m}{2}$, we have $e=\frac{3^h-1}{2}+(3^h-1)k$ for some integer $k$. Viewing $(x+1)^e-x^e-1=0$ as an equation over $\F_{3^h}$, it can be reduced to $(x+1)^\frac{3^h-1}{2}-x^{\frac{3^h-1}{2}}-1=0$. Recall from Section \ref{sec_prem} that $x^{\frac{3^h-1}{2}}=1$ or $-1$ according to whether $x$ is a square of $\F_{3^h}^*$ or not. We shall show that there exists $\theta\in\F_{3^h}\setminus\F_3$ such that $\theta$ is a square in $\F_{3^h}$ and $\theta+1$ is a nonsquare in $\F_{3^h}$.

Consider the equation $uv=1$ over $\F_{3^h}$, which clearly has $3^h-1$ solutions for $(u,v)$. It is easy to see that $(u,v)\mapsto (u+v,u-v)$ is a bijection from $\F_{3^h}^*\times \F_{3^h}^*$ to itself. Thus the number of solutions $(u,v)$ for $u^2-v^2=1$ over $\F_{3^h}$ is also $3^h-1$. It follows that there exists $s\in\F_{3^h}$ such that $s^2+1=u^2$ has no solutions for $u$. Let $\theta=s^2$. Then $\theta+1$ is a nonsquare of $\F_{3^h}$. 
Therefore $(\theta+1)^{\frac{3^h-1}{2}}-\theta^{\frac{3^h-1}{2}}-1=-1-1-1=0$. 

Finally, we show that $\theta\notin\F_3$. If $\theta=0$, then $\theta+1=1=u^2$ has solutions in $\F_{3^h}$. If $\theta=1$, then $\theta+1=-1=u^2$ has solutions in $\F_{3^h}$ since $-1$ is a square when $h$ is even. If $\theta=-1$, then $\theta+1=0=u^2$ has solutions in $\F_{3^h}$. Therefore $\theta\notin\F_3$. The proof is complete.
\end{proof}

\begin{lemma}\label{lem_s5_4}
Let $m$ be a positive integer with $m\equiv 2\pmod{4}$, and let $h$ be an integer with $0\leq h\leq m-1$ and $(h)_2\geq (m)_2$. Let $e$ be an even integer satisfying \eqref{eq_b}. Then $|C_e|=2$.
\end{lemma}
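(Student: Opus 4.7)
The plan is to pin down the residue of $e$ modulo $3^m-1$ as explicitly as possible from the congruence \eqref{eq_b}, and then verify $|C_e|=2$ by direct computation. Since $m\equiv 2\pmod 4$, a routine use of the lifting-the-exponent identity (or direct evaluation at $m=2,6,10,\ldots$) gives $(3^m-1)_2=8$, so $\bigl(\tfrac{3^m-1}{2}\bigr)_2=4$. The assumption $(h)_2\geq (m)_2=2$ means either $h=0$ or $4\mid h$, and in both cases Lemma \ref{lem_gcd}(2) yields $\gcd(3^h+1,3^m-1)=2$. Moreover, in both subcases $3^h+1\equiv 2\pmod 4$, so $(3^h+1)_2=2$ and the odd part of $\gcd(3^h+1,3^m-1)$ equals $1$. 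Combining these pieces gives
\[
\gcd\bigl(2(3^h+1),\,3^m-1\bigr)=4.
\]

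From \eqref{eq_b} we have $(3^m-1)\mid 2e(3^h+1)$, hence $\tfrac{3^m-1}{4}\mid e$. Writing $e\equiv k\cdot\tfrac{3^m-1}{4}\pmod{3^m-1}$ with $k\in\{0,1,2,3\}$ and substituting into \eqref{eq_b}, after dividing both sides by $\tfrac{3^m-1}{4}$ one obtains $k(3^h+1)\equiv 2\pmod 4$. Since $3^h+1\equiv 2\pmod 4$, this forces $k$ to be odd, so $k\in\{1,3\}$. Thus $e$ is congruent to either $\tfrac{3^m-1}{4}$ or $\tfrac{3(3^m-1)}{4}$ modulo $3^m-1$.

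Finally, for either choice the $3$-cyclotomic coset is easy to compute: observing that
\[
9\cdot\frac{3^m-1}{4}=\frac{(8+1)(3^m-1)}{4}=2(3^m-1)+\frac{3^m-1}{4},
\]
we get $9e\equiv e\pmod{3^m-1}$, so $|C_e|\leq 2$. On the other hand $3\cdot\tfrac{3^m-1}{4}=\tfrac{3(3^m-1)}{4}\neq \tfrac{3^m-1}{4}$ modulo $3^m-1$, so $3e\not\equiv e$, giving $|C_e|=2$.

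I do not expect a serious obstacle here; the argument is essentially a $2$-adic bookkeeping exercise, with the only subtlety being the careful handling of the case $h=0$ via the convention $(0)_2=\infty$, which is what allows $h=0$ to fall under the hypothesis $(h)_2\geq (m)_2$ uniformly with the case $4\mid h$.
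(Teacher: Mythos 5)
Your proof is correct, and its first half coincides with the paper's: both arguments hinge on computing $\gcd(2(3^h+1),3^m-1)=4$ and deducing $\frac{3^m-1}{4}\mid e$, whence $9e\equiv e\pmod{3^m-1}$ and $|C_e|\leq 2$. Where you diverge is in ruling out $|C_e|=1$: the paper argues by contradiction, showing $2e\not\equiv 0\pmod{3^m-1}$ from $\frac{3^m-1}{2}=2\gcd(e,\frac{3^m-1}{2})$, whereas you pin down $e$ exactly, showing $e\equiv k\cdot\frac{3^m-1}{4}$ with $k\in\{1,3\}$ and reading off the coset directly. Your version is slightly sharper (it determines $e$ up to two explicit residues), at the cost of the extra step of solving for $k$ modulo $4$; both are equally rigorous. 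One small slip worth fixing: the hypothesis $(h)_2\geq (m)_2=2$ means $h=0$ or $h$ is even, not ``$h=0$ or $4\mid h$'' (e.g.\ $h=2$ qualifies). This does not affect anything downstream, since the two facts you actually use --- $\gcd(3^h+1,3^m-1)=2$ from Lemma \ref{lem_gcd}(2) and $3^h+1\equiv 2\pmod 4$ --- hold for every even $h$ (and $h=0$), which is exactly the set covered by the hypothesis.
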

\begin{proof}
On the one hand, we demonstrate that  $|C_e|\leq 2$. Since $e$ satisfies \eqref{eq_b}, we have
\[\gcd(e(3^h+1),3^m-1)=\gcd\left(\frac{3^m-1}{2},3^m-1\right)=\frac{3^m-1}{2}.\]
Thus, it follows that 
\[\gcd(2e(3^h+1),3^m-1)=3^m-1.\]
From the proof of Lemma \ref{lem_p1m1e}, we know that $\gcd(2(3^h+1),3^m-1)=4$. Combining these results, we obtain the equation 
\[\gcd(2e(3^h+1),3^m-1)=4\cdot\gcd\left(e,\frac{3^m-1}{4}\right).\]
This implies that $\frac{3^m-1}{4}\mid e$. Consequently, we have $e\cdot 3^2\equiv e\pmod{3^m-1}$ and so $|C_e|\leq 2$. 

On the other hand, we show that  $|C_e|>1$. By Lemma \ref{lem_gcd}, we have $\gcd(3^h+1,3^m-1)=2$. Then it follows that 
\[\frac{3^m-1}{2}=\gcd(e(3^h+1),3^m-1)=2\cdot\gcd\left(e,\frac{3^m-1}{2}\right).\]
If $2e\equiv 0\pmod{3^m-1}$, then we would have $2\cdot\gcd(e,\frac{3^m-1}{2})=3^m-1$, which leads to a contradiction. We can thus conclude that $|C_e|=2$. This completes the proof.
\end{proof}

By combining Lemmas \ref{lem_p1m1e}, \ref{lem_s5_2}, \ref{lem_s5_3}, and \ref{lem_s5_4}, we obatin the following conclusion.
\begin{theorem}\label{thm_p1m1}
Let $m, h$ and $e$ be integers with $0\leq h\leq m-1$ and $1<e<3^m-1$. Suppose that $e(3^h+1)\equiv \frac{3^m-1}{2}\pmod{3^m-1}$ and $e\notin C_1$. Then the ternary cyclic code $\mc{C}_{(1,e)}$ does not have parameters $[3^m-1,3^m-1-2m,4]$.
\end{theorem}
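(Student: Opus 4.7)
The plan is to combine Lemmas \ref{lem_p1m1e}, \ref{lem_s5_2}, \ref{lem_s5_3}, and \ref{lem_s5_4} in a short case analysis on the parity of $e$ and on $m \pmod 4$. The whole argument is essentially bookkeeping: the nontrivial work has already been done inside the four preparatory lemmas.

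First I would split on the parity of $e$. If $e$ is odd, Remark \ref{rmk_nescond} immediately gives that $\mc{C}_{(1,e)}$ cannot have parameters $[3^m-1,3^m-1-2m,4]$ (either $|C_e|\neq m$, so the dimension is wrong, or $|C_e|=m$, so by Lemma \ref{lem_dingconds} the minimum distance is not $4$). So from now on I would assume $e$ is even.

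Next I would invoke Lemma \ref{lem_p1m1e}. The existence of an even $e$ with $e(3^h+1)\equiv\frac{3^m-1}{2}\pmod{3^m-1}$ forces one of the following two regimes: $m\equiv 0\pmod 4$, or $m\equiv 2\pmod 4$ together with $(h)_2\geq (m)_2$. In particular, the case of odd $m$ is ruled out entirely, so I only need to handle the remaining three sub-cases.

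The three-way split goes as follows.
\textbf{(a)} If $m\equiv 0\pmod 4$ and $h\neq m/2$, Lemma \ref{lem_s5_2} gives $|C_e|\neq m$, which forces the dimension of $\mc{C}_{(1,e)}$ to differ from $3^m-1-2m$.
\textbf{(b)} If $m\equiv 0\pmod 4$ and $h=m/2$, Lemma \ref{lem_s5_3} produces a root of $(x+1)^e-x^e-1$ lying in $\F_{3^m}\setminus\F_3$, so condition C3 of Lemma \ref{lem_dingconds} fails and the minimum distance is not $4$.
\textbf{(c)} If $m\equiv 2\pmod 4$ with $(h)_2\geq (m)_2$, Lemma \ref{lem_s5_4} yields $|C_e|=2$, which differs from $m$ and again spoils the dimension. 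In every case the conclusion follows.

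There is no serious obstacle in assembling this argument; the only point requiring care is to be sure that the dichotomy in Lemma \ref{lem_p1m1e} really covers all $e$-even hypotheses and that the further split in case \textbf{(a)}/\textbf{(b)} is exhaustive, which is a straightforward verification. If anything acts as an obstacle, it is purely notational: one must be careful with the convention $(0)_2=\infty$ so that the boundary case $h=0$ is correctly slotted into case \textbf{(c)} when $m\equiv 2\pmod 4$, rather than being accidentally excluded.
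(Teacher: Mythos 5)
Your proposal is exactly the paper's proof: the paper's entire argument for Theorem \ref{thm_p1m1} is the single sentence ``By combining Lemmas \ref{lem_p1m1e}, \ref{lem_s5_2}, \ref{lem_s5_3}, and \ref{lem_s5_4}, we obtain the following conclusion,'' and your case analysis (odd $e$ dispatched via Remark \ref{rmk_nescond}; even $e$ split by Lemma \ref{lem_p1m1e} into $m\equiv 0\pmod 4$ with $h\neq m/2$, $m\equiv 0\pmod 4$ with $h=m/2$, and $m\equiv 2\pmod 4$ with $(h)_2\geq (m)_2$) is precisely the intended assembly of those lemmas. One caveat, which you share with the paper rather than introduce: in your case \textbf{(c)} the step ``$|C_e|=2$, which differs from $m$'' fails when $m=2$, and in fact for $m=2$, $h=0$, $e=2$ all hypotheses of the theorem hold while $\mc{C}_{(1,2)}$ is the optimal $[8,4,4]$ code (conditions C1--C3 of Lemma \ref{lem_dingconds} are easily checked), so the theorem implicitly needs $m>2$; this is a defect of the statement, not of your argument.
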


\section{Concluding remarks}\label{sec_conclude}
\subsection{Summary of results of this paper}
In this paper, we consider the ternary cyclic codes $\mc{C}_{(1,e)}$ and determine their optimality with respect to the sphere packing bound in different cases. In the first part of this paper, we focus on two open problems of Ding and Helleseth proposed in \cite{DH13}. We present the first counterexamples for both problems and also construct new families of optimal codes which provide positive answers to Problem \ref{prob7.9}. In the second part, we study cyclic codes $\mc{C}_{(1,e)}$ with $e(3^h\pm 1)\equiv \frac{3^m-a}{2}\pmod{3^m-1}$ and $a$ odd. Two new families of optimal codes $\mc{C}_{(1,e)}$ with $e$ in this form are constructed. Several nonexistence results of such optimal codes are also obtained. The results in this work are summarized in Table \ref{tab_newDHfamily} and Table \ref{tab_summary}.

\begin{table}[h]\aboverulesep=0pt \belowrulesep=0pt
\setlength{\abovecaptionskip}{0cm}
\setlength{\belowcaptionskip}{0cm}
\caption{\footnotesize New optimal ternary cyclic codes $\mc{C}_{(1,e)}$ related to Ding and Helleseth's problems}
\label{tab_newDHfamily}
\centering
\[\footnotesize
\arraycolsep=5pt
\begin{tabular}{lll}
\toprule
$e$ ($e$ is even) & \text{Conditions} &  \text{Reference} \\
\midrule
&&\\[-0.4cm]
$\frac{3^{\frac{m+3}{2}}+5}{2}$\hspace{5em}  & $m>1$ odd, $m\equiv 7, 11\pmod {12}$ and $\gcd(m, 13)=1$ \hspace{3em} & (Theorem \ref{thm_main0})\\[0.5em]
$\frac{3^{\frac{m+2}{3}}+5}{2}$ & $m>1$ odd, $m\equiv 1\pmod 6$ and $\gcd(m, 235)=1$ & (Theorem \ref{thm_main01})\\[0.2em]
\bottomrule
\end{tabular}
\]
\end{table}

\begin{table}[H]\aboverulesep=0pt \belowrulesep=0pt
\setlength{\abovecaptionskip}{0cm}
\setlength{\belowcaptionskip}{0cm}
\caption{\footnotesize Summary of results on $\mc{C}_{(1,e)}$ with $e(3^h\pm 1)\equiv \frac{3^m-a}{2}\pmod{3^m-1}$, $a$ odd}
\label{tab_summary}
\centering
\[\footnotesize
\begin{tabularx}{\textwidth}{lX}
\toprule
$e$ ($e$ is even) &   \text{Result} \\
\midrule
&\\[-1em]
$e(3^h-1)\equiv \frac{3^m-a}{2}\pmod{3^m-1}$, $a\equiv 3\pmod{4}$ & Let $a=2\cdot 3^t\delta+1$ with $t\geq 0$ and $\delta\in\{1,-1\}$. If $m$ is odd, $\gcd(h,m)=1$ and $\gcd(e-\frac{\delta+3}{2}3^t,3^m-1)=\frac{\delta+3}{2}$, then $\mc{C}_{(1,e)}$ has parameters $[3^m-1,3^m-1-2m,4]$. \\
&\hfill(Theorem \ref{thm_main1})\\[0.1em]
$e(3^h+1)\equiv \frac{3^m-a}{2}\pmod{3^m-1}$, $a\equiv 3\pmod{4}$ & If $m$ is odd and $a=-2\cdot 3^t+1$ with $t\geq 0$, then $\mc{C}_{(1,e)}$ has parameters $[3^m-1,3^m-1-2m,4]$.\\
&\hfill (Theorem \ref{thm_main2})\\[0.1em]
$e(3^h-1)\equiv \frac{3^m-a}{2}\pmod{3^m-1}$, $a\equiv 1\pmod{4}$ &If $a\equiv 5\pmod{8}$, then $\mc{C}_{(1,e)}$ does not have parameters $[3^m-1,3^m-1-2m,4]$.\\
&\hfill (Theorem \ref{thm_pma5})\\[0.1em]
$e(3^h+1)\equiv \frac{3^m-a}{2}\pmod{3^m-1}$, $a\equiv 1\pmod{4}$ &If $a\equiv 5\pmod{8}$, then $\mc{C}_{(1,e)}$ does not have parameters $[3^m-1,3^m-1-2m,4]$.\\
&\hfill (Theorem \ref{thm_pma5})\\[0.1em]
& If $a=1$, then $\mc{C}_{(1,e)}$ does not have parameters $[3^m-1,3^m-1-2m,4]$.\\
&\hfill(Theorem \ref{thm_p1m1}) \\[0.1em]
\bottomrule
\end{tabularx}
\]
\end{table}

\subsection{Cyclic codes $\mc{C}_{(1,e)}$ with $e(3^h-1)\equiv \frac{3^m-a}{2}\pmod{3^m-1}$ where $a\equiv 1\pmod{8}$}

The only case in which we do not have a result for the codes $\mc{C}_{(1,e)}$ with $e(3^h-1)\equiv \frac{3^m-a}{2}\pmod{3^m-1}$ is when $a\equiv 1\pmod{8}$. We make further discussions for this case in this subsection.

We have explained in Remark \ref{rmk_nescond} that in order to construct cyclic codes $\mc{C}_{(1,e)}$ with parameters $[3^m-1,3^m-1-2m,4]$, one must choose $e$ to be even. Thus, the first step in constructing such codes is to determine the conditions under which there exists an even integer $e$ that satisfies the prescribed property. We do not intend to study this problem here for all $a$ with $a\equiv 1\pmod{8}$, but only touch on the case $a=1$ as a starting point for future research. For $a=1$, we have the following result.

\begin{proposition}\label{lem_mm_econd}
Let $m>1, h$ be positive integers with $1\leq h\leq m-1$. Then there exists an even integer $e$ such that 
\begin{equation}\label{eq_e4}
e(3^h-1)\equiv \frac{3^m-1}{2}\pmod{3^m-1}
\end{equation}
if and only if $m\equiv 0\pmod{8}$ and $h\not \equiv0\pmod{\frac{(m)_2}{2}}$, or $m\equiv 2, 4, 6\pmod{8}$ and $h$ is odd.
\end{proposition}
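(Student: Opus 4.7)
The plan is to adapt the reduction used in Lemma~\ref{lem_e1_exist}: an even $e$ satisfying \eqref{eq_e4} exists if and only if
\[\gcd\bigl(2(3^h-1),3^m-1\bigr)\,\Bigl|\,\tfrac{3^m-1}{2}.\]
When $m$ is odd the right-hand side is odd while the left-hand side is even, so the divisibility fails (this is exactly Remark~\ref{rmk_eeven} applied to $a=1$). From here on I would assume $m$ is even and reduce everything to a $2$-adic comparison, using the notation $(\cdot)_2,(\cdot)_{2'}$ introduced just before Lemma~\ref{lem_gcd}.

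Next I would record the relevant $2$-adic valuations. Factoring $3^m-1=(3^{m/2}-1)(3^{m/2}+1)$ and noting that these two factors differ by $2$, an elementary induction on $(m)_2$ gives $(3^m-1)_2=4(m)_2$ for even $m$, and hence $\bigl(\tfrac{3^m-1}{2}\bigr)_2=2(m)_2$. The same identity yields $(3^h-1)_2=2$ for odd $h$ and $(3^h-1)_2=4(h)_2$ for even $h$. Because the odd part of $\gcd(2(3^h-1),3^m-1)$ automatically divides the odd part of $\tfrac{3^m-1}{2}$, the divisibility condition is equivalent to the single inequality
\[\min\bigl(2(3^h-1)_2,\,4(m)_2\bigr)\leq 2(m)_2,\]
and since $4(m)_2>2(m)_2$ this collapses to the clean bound $(3^h-1)_2\leq (m)_2$.

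The last step is to translate this single inequality into residue classes of $m$ modulo $8$. For odd $h$ it reads $2\leq (m)_2$, automatic for even $m$; for even $h$ it reads $4(h)_2\leq (m)_2$, i.e.\ $(h)_2\leq (m)_2/4$. In the cases $m\equiv 2,4,6\pmod 8$, one has $(m)_2\in\{2,4\}$, so $(m)_2/4\leq 1$ and no even $h$ can qualify: only odd $h$ survive, matching the second alternative of the statement. In the case $m\equiv 0\pmod 8$ one has $(m)_2/4\geq 2$, and since all quantities involved are powers of $2$ the combined condition ``$h$ odd or $(h)_2\leq (m)_2/4$'' is equivalent to $(h)_2<(m)_2/2$, i.e.\ $(m)_2/2\nmid h$, which is exactly $h\not\equiv 0\pmod{(m)_2/2}$. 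The only real obstacle is the bookkeeping in matching two-sided inequalities between powers of $2$ with the divisibility phrasing in the statement, together with the borderline case $m\equiv 4\pmod 8$, where $(m)_2/4=1$ forces $h$ odd; apart from that, the argument is purely computational.
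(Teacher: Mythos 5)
Your proof is correct and follows essentially the same route as the paper's: both reduce the existence of an even $e$ to the divisibility $\gcd(2(3^h-1),3^m-1)\mid\frac{3^m-1}{2}$ and then to the $2$-adic inequality $4\cdot(3^h-1)_2\leq(3^m-1)_2$. Your use of the closed-form valuations $(3^m-1)_2=4(m)_2$ and $(3^h-1)_2=4(h)_2$ for even $m,h$ (with $(3^h-1)_2=2$ for odd $h$) collapses the paper's three-case analysis into the single condition $(3^h-1)_2\leq(m)_2$, which is a tidier but equivalent bookkeeping of the same computation.
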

\begin{proof}
An even integer $e$ satisfying \eqref{eq_e4} exists if and only if 
\[\gcd(2(3^h-1),3^m-1)\left\vert \frac{3^m-1}{2}\right..\]
The latter condition holds if and only if
\begin{equation}\label{dd}
4\cdot(3^h-1)_2\leq (3^m-1)_2.
\end{equation}
We divide the proof into the following three cases:

{\bf Case 1}. $m\equiv 1 \pmod{2}$. 

In this case, we have $3^m-1\equiv 2\pmod{4}$. Then, \eqref{dd} cannot happen. 

{\bf Case 2}. $m\equiv 0\pmod{8}$. 

In this case, let $M=\frac{(m)_2}{2}$. If $h\equiv 0\pmod{M}$, then there exists a positive integer $t$ such that $h=Mt$. Consequently, we have 
\[(3^h-1)_2=(3^{Mt}-1)_2=(3^{M}-1)_2(3^{M(t-1)}+\cdots+3^{M}+1)_2\geq (3^{M}-1)_2.\] 
On the other hand, we have
\begin{align*}
&(3^m-1)_2=(3^{(m)_2(m)_{2'}}-1)_2\\
=&(3^{(m)_2}-1)_2(3^{(m)_2((m)_{2'}-1)}+\cdots+3^{(m)_2}+1)_2\\
=&(3^{(m)_2}-1)_2=(3^{2M}-1)_2=(3^M+1)_2(3^M-1)_2\\
=&2\cdot(3^M-1)_2.
\end{align*}
Thus \eqref{dd} cannot happen. 

If $h\not\equiv 0\pmod{M}$ and $h$ is odd, then $3^h-1\equiv 2\pmod{4}$ and $3^m-1\equiv 0\pmod{8}$. Thus, we obtain  $4\cdot (3^h-1)_2=8\leq (3^m-1)_2$, which is precisely the inequality given in \eqref{dd}.

If $h\not\equiv0\pmod{M}$ and $h$ is even, then $M=2^w\cdot(h)_2$ where $w\geq 1$. Thus, we obtain 
\begin{align*}
&(3^m-1)_2=(3^{(m)_2}-1)_2=(3^{2M}-1)_2=(3^{2^{w+1}(h)_2}-1)_2\\
=&(3^{(h)_2}-1)_2 \cdot (3^{(h)_2(2^{w+1}-1)}+\cdots+3^{(h)_2}+1)_2\\
\geq & 4\cdot(3^{(h)_2}-1)_2.
\end{align*}

On the other hand, we have
\begin{align*}
(3^h-1)_2=(3^{(h)_2(h)_{2'}}-1)_2=(3^{(h)_2}-1)_2(3^{(h)_2((h)_{2'}-1)}+\cdots+3^{(h)_2}+1)_2=(3^{(h)_2}-1)_2.
\end{align*}

Thus, we obtain  $4\cdot (3^h-1)_2=4\cdot (3^{(h)_2}-1)_2\leq (3^m-1)_2$, which is precisely the inequality given in \eqref{dd}.

{\bf Case 3}. $m\equiv 2, 4, 6\pmod{8}$. 

If $h$ is odd, then $3^h-1\equiv 2\pmod{4}$ and $3^m-1\equiv 0\pmod{8}$. Thus, we obtain $4\cdot (3^h-1)_2=8\leq (3^m-1)_2$, which is precisely the inequality given in \eqref{dd}. 

If $h$ is even, then $3^h-1\equiv 0\pmod{8}$ and $3^m-1\not\equiv 0\pmod{32}$. Thus \eqref{dd} cannot happen.
This completes the proof. 
\end{proof}

Recall from Theorem \ref{thm_p1m1} that if $e$ satisfies $e(3^h+1)\equiv \frac{3^m-1}{2}\pmod{3^m-1}$, then $\mc{C}_{(1,e)}$ is never optimal. The situation is much different when $e$ satisfies $e(3^h-1)\equiv \frac{3^m-1}{2}\pmod{3^m-1}$. Computational evidence suggests that in this case optimal codes $\mc{C}_{(1,e)}$ do exist for specific choices of $e$. For example, when $m\in\{2,6,10,14,18\}$ and $h=\frac{m}{2}$, the exponents $e=\frac{3^h+1}{2}+(3^h+1)t$ (for certain $t$) yield optimal codes. This motivates the following problem.
%This result gives a necessary condition for the existence of a code $\mc{C}_{(1,e)}$ with $e(3^h-1)\equiv \frac{3^m-1}{2}\pmod{3^m-1}$ having parameters $[3^m-1,3^m-1-2m,4]$. We checked by Magma \cite{Magma} that when $m\in\{2,6,10,14,18\}$ and $h=\frac{m}{2}$, the code $\mc{C}_{(1,e)}$ has parameters $[3^m-1,3^m-1-2m,4]$ for some even integers $e$ satisfying \eqref{eq_e4}. However, we are currently unable to provide a construction of an infinite family of such codes. We leave it as an open problem.

\begin{problem}
Let $m$ be a positive integer with $m\equiv 2\pmod{4}$ and let $h=\frac{m}{2}$. Let $e=\frac{3^h+1}{2}+(3^h+1)t$ where $0\leq t\leq 3^h-2$. For which values of $t$, the codes $\mc{C}_{(1,e)}$ have parameters $[3^m-1,3^m-1-2m,4]?$
\end{problem}

\section*{Acknowledgement} 
The authors are very grateful to the reviewers and the editor for their detailed comments and suggestions that improved the presentation and quality of this paper.
Jingjun Bao acknowledges the support of the National Natural Science Foundation of China Grant No. 12471313. 
Hanlin Zou acknowledges the support of the National Natural Science Foundation of China Grant No. 12461061.

\bibliographystyle{plain}

\end{document}